\theoremstyle{plain}
\newtheorem{assumption}{Assumption}
\newtheorem{proposition}{Proposition}
\newtheorem{remark}{Remark}
\newtheorem{lemma}{Lemma}
\newtheorem{theorem}{Theorem}
\newcommand{\G}{\mathcal{G}}
\newcommand{\V}{\mathcal{V}}
\newcommand{\E}{\mathcal{E}}
\newcommand{\N}{\mathcal{N}}
\renewcommand{\L}{\mathbf{L}}
\newcommand*{\dif}{\mathop{}\!\mathrm{d}}
\newcommand{\real}{\mathbb{R}}
\newcommand{\complex}{\mathbb{C}}
\newcommand{\naturals}{\mathbb{N}}
\newcommand{\F}{\mathbb{F}}
\newcommand{\lb}{[\![}
\newcommand{\rb}{]\!]}
\newcommand{\sys}[2]{\lb #1, #2\rb}
\newcommand{\innerp}[2]{\left \langle #1, #2\right \rangle}
\title{Structured stability analysis of networked systems with uncertain links
\thanks{Supported in part by the Australian Research Council (DP210103272).}}
\author{Simone Mariano and Michael Cantoni
	\thanks{S. Mariano and M. Cantoni are with the Department of Electrical and Electronic Engineering, The University of Melbourne, Australia. E-mails: {\tt \{simone.mariano,cantoni\}@unimelb.edu.au}}}
\begin{document}

\maketitle
\begin{abstract}
An input-output approach to stability analysis is explored for networked systems with uncertain link dynamics. The main result consists of a collection of integral quadratic constraints, which together imply robust stability of the uncertain networked system, under the assumption that stability is achieved with ideal links. The conditions are decentralized inasmuch as each involves only agent and uncertainty model parameters that are local to a corresponding link. This makes the main result, which imposes no restriction on network structure, suitable for the study of large-scale systems.
\end{abstract}
\begin{keywords}
Input-Output Methods, Integral-Quadratic Constraints (IQCs), Scalable Network Robustness Analysis
\end{keywords}

\section{Introduction}

Large-scale networks of dynamical systems arise in diverse contexts, including power and water distribution, transportation, manufacturing, ecology, and economics. Such networks have been studied in systems and control theoretic terms for many years; e.g., see~\cite{siljak1978large,arcak2016networks} for state-space methods, and~\cite{moylan1978stability,vidyasagar1981input} for input-output methods. 

An input-output approach is pursued in this paper. The aim is to devise scalable conditions for assessing the robustness of networked system stability to uncertainty in the link dynamics. This aim is achieved via application of the well-known integral quadratic constrainst (IQC) based robust stability theorem from~\cite{megretski1997system}, to a structured feedback interconnection model of the network, along lines related to the work reported in~\cite{lestas2006scalable,jonsson2010scalable,andersen2014robust,khong2014scalable,pates2016scalable}. Unlike~\cite{andersen2014robust,khong2014scalable}, scalability is achieved without particular restrictions on network structure. 

The developments here are most closely related to \cite{pates2016scalable}, where an approach to decomposing global IQC stability certificates is proposed for a structured feedback interconnection. However, the analysis underpinning~\cite[Theorem 1]{pates2016scalable} is not directly applicable to the model formulated here. This stems from the subsequent focus on link uncertainty relative to the ideal unity link, as is common in the study of cyber-physical systems~\cite{heemels2010networked,Cantoni2020}, and the use of a permutation `routing' matrix to encode network structure, as in~\cite{langbort2004distributed}. The difference is overcome by expanding upon ideas from \cite{pates2016scalable}, and leveraging a structured coprime factorization of the ideal network dynamics.   

The paper is organized as follows: Various preliminaries are established next. The structured feedback model of a networked system with uncertain links is given in Section~\ref{sec:ideal}. The IQC robust stability theorem is then applied in Section~\ref{sec:Robust_stability}, to arrive at a structured robust stability condition that is amenable to link-wise decomposition for decentralized verification, as explored in Section~\ref{sec:struct}. Some concluding remarks are provided in Section~\ref{sec:conc}.

\section{Preliminaries} 

\subsection{Basic notation} \label{subsec:notation}
The natural, real, and complex numbers are denoted by $\naturals$, $\real$, and $\complex$, respectively.
Given $i<j\in\naturals$, $[i:j]:=\{k\in\naturals~|~i\leq k \leq j\}$, and $\real_{\bullet \alpha}:=\{\beta\in\mathbb{R}~|~\beta \bullet \alpha\}$ for given order relation $\bullet\in\{>,\geq,<,\leq\}$. 

With $\F\in\{\real,\complex\}$, the $p$-dimensional Euclidean space over $\F$ is denoted by $\F^p$, for $p\in\naturals$. Given $x\in\F^p$, for $i\in[1:p]$, the scalar $x_i\in\F$ denotes the $i$-th coordinate.
The vector $\mathbf{1}_p\in\mathbb{R}^p$ is such that $(\mathbf{1}_p)_i=1$ for every $i\in[1:p]$.

$\F^{p\times q}$ denotes the space of $p\times q$ matrices over $\F$, for $p,q\in\naturals$.
The identity matrix is denoted by $I_p \in \mathbb{F}^{p\times p}$, the square zero matrix by $O_p\in \mathbb{F}^{p\times p}$, and the respective $p\times q$ matrices of ones and zeros by $\mathbf{1}_{p\times q}$ and $\mathbf{0}_{p\times q}$. Given $M\in \F^{p\times q}$, for $i\in[1:p]$, $j\in[1:q]$, the respective matrices $M_{(\cdot,j)} \in \F^{p\times 1}\sim\F^{p}$, and $M_{(i,\cdot)}\in\F^{1\times q}$, denote the $j$-th column, and $i$-th row. Further, $M_{(i,j)}\in\F$ denotes the entry in position $(i,j)$. Given $x\in\mathbb{F}^p$, the matrix $M=\mathrm{diag}(x)\in\mathbb{F}^{p\times p}$ is such that $M_{(i,i)}=x_i$ and $M_{(i,j)}=0$, $j\neq i\in[1:p]$, whereby $I_p=\mathrm{diag}(\mathbf{1}_p)$. The transpose of $M\in\mathbb{F}^{p\times q}$ is denoted by $M^\prime\in\mathbb{F}^{q\times p}$, and $M^*=\bar{M}^\prime$ denotes the complex conjugate transpose. For $M_i=M_i^*\in\mathbb{F}^{p\times p}$, $i\in\{1,2\}$, $M_i\succ 0$ means there exists $\epsilon>0$ such that $x^* M_i x \geq \epsilon x^*x$ for all $x\in\mathbb{F}^p$, $M_i\succeq 0$ means $x^* M_i x \geq 0$ for all $x\in\mathbb{F}^p$, and $M_1\succeq (\text{resp.}~\succ) M_2$ means $M_1-M_2 \succeq (\text{resp.}~\succ) 0$. 

\subsection{Signals and systems}
The Hilbert space of square integrable signals $v=(t\in\real_{\geq 0}\mapsto v(t)\in\real^p)$ is denoted by $\L_{2\,}^p$, where the inner-product $\langle v, u \rangle := \int_{0}^{\infty}v(t)^\prime u(t)\dif t$ and norm $\|v\|_2 := \langle v, v \rangle^{1/2}$ are finite; the superscript is dropped when $p=1$. The corresponding extended space of locally square integrable signals is denoted by $\L_{2e}^p$; i.e.,  $v:\real_{\geq 0}\rightarrow\real^p$ such that $\boldsymbol{\pi}_{\tau}(v)\in\L_{2\,}^p$ for all $\tau\in\real_{\geq 0}$, where $(\boldsymbol{\pi}_\tau(v))(t):=f(t)$ for $t\in[0,\tau)$, and $(\boldsymbol{\pi}_\tau(v))(t):=0$ otherwise. 
The composition of maps $F:\L_{2e}^p \mapsto \L_{2e}^r$ and $G:\L_{2e}^q \mapsto \L_{2e}^p$ is denoted by $F\circ G := (v\mapsto F(G(v))$, and the direct sum by $F\oplus G:=((u,v)\mapsto (F(u),G(v)))$. Similarly, $\bigoplus_{i=1}^n G_i = G_1\oplus\cdots\oplus G_n$. When $G$ is linear, in the sense $(\forall \alpha,\beta\in\real)~(\forall u,v\in\L_{2e}^q)~G(\alpha u + \beta v) = \alpha G(u) + \beta G(v)$, the image of $v$ under $G$ is often written $Gv$, and in composition with another linear map the $\circ$ dropped. The action of a linear system $G:\L_{2e}^{q}\rightarrow\L_{2e}^{p}$ corresponds to the action of $p\cdot q$ scalar systems $G_{(i,j)}:\L_{2e}\rightarrow \L_{2e}$, $i\in[1:p]$, $j\in[1:q]$, on the coordinates of the signal vector input associated with  $\L_{2e}^q\sim\L_{2e}\times\cdots\times\L_{2e}$; i.e., $(Gv)_i = \sum_{j=1}^q G_{(i,j)}v_j$. Matrix notation is used to denote this.

A system is any map $G:\L_{2e}^q\rightarrow \L_{2e}^p$, with $G(0)=0$, that is {\em causal} in the sense  $\boldsymbol{\pi}_\tau(G(u)) = \boldsymbol{\pi}_\tau(G (\boldsymbol{\pi}_\tau(u))$ for all $\tau\in\real_{\geq 0}$. It is
called {\em stable} if $u\in\L_{2\,}^q$ implies $G(u) \in \L_{2\,}^p$ and  $\|G\|:=\sup_{0\neq u} \|G(u)\|_2/\|u\|_2< \infty$. The composition of stable systems is therefore stable. The feedback interconnection with system $\Delta:\L_{2e}^p \rightarrow \L_{2e}^q$ is {\em well-posed} if for all $(d_y,d_u)\in\L_{2e}^p\times\L_{2e}^q$, there exists unique $(y,u)\in\L_{2e}^p\times\L_{2e}^q$, such that 
\begin{align} \label{eq:feedback1}
y = G(u) + d_y, \qquad u = \Delta(y) + d_u, 
\end{align}
and $[\![G,\Delta]\!] := ((d_y,d_u) \mapsto (y,u))$ is causal; see Figure~\ref{fig:realnet_0}. If, in addition, $\|[\![G,\Delta]\!]\| < \infty$, then the closed-loop is called stable. 
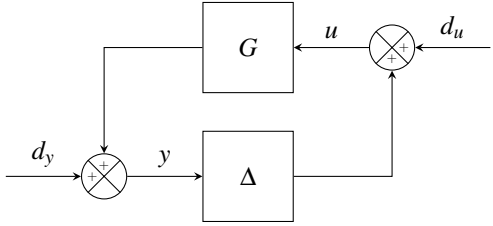
\begin{figure}[htbp]
\centering
\begin{tikzpicture}
 
\node [draw,
    minimum width=1.2cm,
    minimum height=1.2cm,
]  (permutation) at (0,0) {$G$};
 
\node [draw,
    minimum width=1.2cm, 
    minimum height=1.2cm, 
    below=.5cm  of permutation
]  (nets) {$\Delta$};

\node[draw,
    circle,
    minimum size=0.6cm,
    left=1cm  of nets
] (sum){};
 
\draw (sum.north east) -- (sum.south west)
(sum.north west) -- (sum.south east);
 
\node[left=-1pt] at (sum.center){\tiny $+$};
\node[above=-1pt] at (sum.center){\tiny $+$};


\node[draw,
    circle,
    minimum size=0.6cm,
  right= 1cm   of permutation 
] (sum2){};
 
\draw (sum2.north east) -- (sum2.south west)
(sum2.north west) -- (sum2.south east);
 
\node[right=-1pt] at (sum2.center){\tiny $+$};
\node[below=-1pt] at (sum2.center){\tiny $+$};

\draw[-stealth] (sum.east) -- (nets.west)
    node[midway,above]{$y$};
 
\draw[-stealth] (permutation.west) -| (sum.north);
 
\draw [stealth-] (sum.west) -- ++(-1,0) 
    node[midway,above]{$d_{y}$};

\draw[-stealth]  (nets.east) -|  (sum2.south);
 
\draw[-stealth] (sum2.west) -- (permutation.east) 
    node[midway,above]{$u$};
 
\draw [stealth-] (sum2.east) -- ++(1,0) 
    node[midway,above]{$d_{u}$};

\end{tikzpicture}
    \caption{Standard feedback interconnection. }
    \label{fig:realnet_0}      
\end{figure}

\begin{lemma} \label{lem:stabstab}
 If $G:\L_{2e}^q\rightarrow \L_{2e}^p$ and $\Delta:\L_{2e}^p\rightarrow \L_{2e}^q$ are both stable, and $G$ is linear, then the following are equivalent:
 \begin{enumerate}[i)]
     \item $[\![G,\Delta]\!]$ is stable; \label{it:stable}
     \item $(I_p-G\circ\Delta):\L_{2e}^p\rightarrow\L_{2e}^p$ has a causal inverse that is also stable, where pointwise multiplication by $I_p$ is not distinguished from the matrix for convenience. \label{it:causalinv} 
 \end{enumerate}
\end{lemma}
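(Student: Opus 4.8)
The plan is to reduce both statements to an algebraic identity for the ``top'' loop signal $y$, and then to track causality and finite gain separately. Writing $M := I_p - G\circ\Delta$, the first step is to eliminate $u$ from~\eqref{eq:feedback1}: substituting the second equation into the first and using linearity of $G$ to pull $G$ through the sum $\Delta(y)+d_u$ gives
\begin{align} \label{eq:reduced}
M(y) = (I_p - G\circ\Delta)(y) = d_y + G(d_u),
\end{align}
after which $u = \Delta(y)+d_u$ is determined by $y$. Conversely, any $y$ solving~\eqref{eq:reduced} yields a solution $(y,\Delta(y)+d_u)$ of~\eqref{eq:feedback1}. This identity is where linearity of $G$ is essential; notably, no linearity of $\Delta$ is needed.

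The second step establishes that well-posedness is equivalent to $M$ having a causal inverse. For the ``only if'' direction I would fix $d_u=0$ and let $d_y$ range over $\L_{2e}^p$: existence and uniqueness of $(y,u)$ then forces $M$ to be a bijection of $\L_{2e}^p$, with surjectivity because $M(y)=d_y$ is solvable for every $d_y$, and injectivity from the uniqueness clause applied via~\eqref{eq:reduced}. Causality of $M^{-1}$ follows since $w\mapsto y$ is the first coordinate of the causal map $\sys{G}{\Delta}$ evaluated at $(w,0)$. For the ``if'' direction, given a causal $M^{-1}$ I would define $y := M^{-1}(d_y + G(d_u))$ and $u := \Delta(y)+d_u$, verify through~\eqref{eq:reduced} that this solves~\eqref{eq:feedback1}, note uniqueness from the same identity, and read off causality of $(d_y,d_u)\mapsto(y,u)$ from causality of $G$, $\Delta$, and $M^{-1}$.

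The third step upgrades this to finite gain under the standing well-posedness. For i)$\,\Rightarrow\,$ii), since $M^{-1}(w)$ is the first coordinate of $\sys{G}{\Delta}(w,0)$ and $\|(w,0)\|_2 = \|w\|_2$, the bound $\|M^{-1}(w)\|_2 \leq \|\sys{G}{\Delta}\|\,\|w\|_2$ gives boundedness, and $\L_2^p$-invariance is inherited in the same way; hence $M^{-1}$ is stable. For ii)$\,\Rightarrow\,$i), the explicit formulas $y = M^{-1}(d_y + G(d_u))$ and $u = \Delta(y)+d_u$ exhibit $\sys{G}{\Delta}$ as built from $G$, $\Delta$, $M^{-1}$, and sums by composition, so stability of each factor, together with the fact (stated above) that compositions of stable systems are stable, yields $\|\sys{G}{\Delta}\| < \infty$.

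I expect the only real obstacle to be bookkeeping rather than mathematics: one must keep in mind that statement i) already presupposes well-posedness, so that $\sys{G}{\Delta}$ is defined; one must argue the extended-space bijectivity of $M$ on $\L_{2e}^p$ cleanly; and one must keep the causality arguments honest, since $M^{-1}$ has to be shown causal on the extended space rather than merely bounded on $\L_2^p$. Linearity of $G$ enters only once, in deriving~\eqref{eq:reduced}, but it is indispensable there.
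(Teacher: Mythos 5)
Your proposal is correct and follows essentially the same route as the paper's own proof: the direction i)$\Rightarrow$ii) by freezing $d_u=0$ so that well-posedness and stability of $\sys{G}{\Delta}$ yield bijectivity, causality, and boundedness of $(I_p-G\circ\Delta)^{-1}$, and the direction ii)$\Rightarrow$i) by using linearity of $G$ to reduce the loop equations to $(I_p-G\circ\Delta)(y)=d_y+G(d_u)$ and then reading off causality and finite gain from compositions of causal, stable maps. Your only departure is cosmetic --- you state the reduction identity once up front and reuse it in both directions, whereas the paper invokes it only where linearity is actually needed.
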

\begin{proof}
See Appendix.
\end{proof}

The following result is a well-known robust feedback stability theorem, taken from~\cite{megretski1997system}:
\begin{theorem}
\label{thm:robust_stability}
Given stable system $\Delta:\L_{2e}^p \rightarrow \L_{2e}^q$, and bounded linear self-adjoint $\Pi:\L_{2\,}^p\times\L_{2\,}^q \rightarrow \L_{2\,}^p\times \L_{2\,}^q$, in the sense $\langle g_1, \Pi g_2 \rangle=\langle \Pi g_1, g_2\rangle$ for all $g_1,g_2\in\L_{2\,}^p\times \L_{2\,}^q$, suppose 
\begin{equation}
\label{eq:stab_IQC_Delta}
\left\langle 
(y,u),
\Pi 
(y,u)
\right\rangle \geq 0,
\end{equation}
with $u=\alpha \Delta(y)$, for all $y\in\L_{2\,}^p$, and $\alpha\in[0,1]$. Further, given stable $G:\L_{2e}^q\mapsto \L_{2e}^p$, suppose $[\![G,\alpha \Delta]\!]$ is well-posed for all $\alpha\in[0,1]$, and there exists $\epsilon>0$ such that 
\begin{equation}
\label{eq:stab_G}
\left\langle 
(y,u),
\Pi 
(y,u)
\right\rangle \leq -\epsilon \| u\|_2^2,
\end{equation}
with $y = G(u)$,
for all $u\in\L_{2\,}^q$.
 Then, $\sys{G}{\Delta}$ is stable. 
\end{theorem}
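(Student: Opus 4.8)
The plan is to prove stability by a homotopy (continuation) argument on the scaling parameter $\alpha$, exploiting that $\sys{G}{0\cdot\Delta}$ is trivially stable and that $\sys{G}{\alpha\Delta}$ is well-posed for every $\alpha\in[0,1]$ by hypothesis. Concretely, I would set $S:=\{\alpha\in[0,1] : \sys{G}{\alpha\Delta}\text{ is stable}\}$ and show $S=[0,1]$, which gives the claim at $\alpha=1$. The base case $0\in S$ is immediate: at $\alpha=0$ the loop equations \eqref{eq:feedback1} read $u=d_u$ and $y=G(d_u)+d_y$, so $\sys{G}{0\cdot\Delta}$ inherits finite gain from the stability of $G$.

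The first substantive step is a \emph{uniform} a priori bound: there is a constant $C$, depending only on $\|\Pi\|$, $\epsilon$ and $\|G\|$ but \emph{not} on $\alpha$, such that whenever $\alpha\in S$ and $(y,u)=\sys{G}{\alpha\Delta}(d_y,d_u)\in\L_2^p\times\L_2^q$ one has $\|(y,u)\|_2\le C\,\|(d_y,d_u)\|_2$. To obtain it I would use the loop relations $\alpha\Delta(y)=u-d_u$ and $G(u)=y-d_y$, apply the $\Delta$-inequality \eqref{eq:stab_IQC_Delta} to $y$ (so the argument pair is $(y,u-d_u)$) and the $G$-inequality \eqref{eq:stab_G} to $u$ (so the pair is $(y-d_y,u)$), and subtract. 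Expanding both quadratic forms and using the bilinearity and self-adjointness of $\Pi$, the pure $(y,u)$ contributions cancel, leaving $\epsilon\|u\|_2^2$ dominated by terms each carrying a factor of $d_y$ or $d_u$; bounding these via $\|\Pi\|$ and Cauchy--Schwarz, and then substituting $\|y\|_2\le\|G\|\,\|u\|_2+\|d_y\|_2$ from $y=G(u)+d_y$, closes the estimate into a quadratic inequality in $\|u\|_2$ whose solution yields an $\alpha$-independent $C$.

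With the uniform bound in hand, I would show that $S$ is open (and closed) in $[0,1]$ by a perturbation argument whose step size is uniform precisely because the gain over $S$ is capped at $C$. Fixing $\alpha_0\in S$, I would rewrite the $\alpha$-loop as the $\alpha_0$-loop driven by the modified input $d_u+(\alpha-\alpha_0)\Delta(y)$, and characterise its $\L_2$ solution as the fixed point of the induced self-map; this map is a contraction once $|\alpha-\alpha_0|\,\|\Delta\|\,C<1$, which furnishes a neighbourhood of $\alpha_0$ contained in $S$ of radius independent of $\alpha_0$. Since well-posedness already supplies a unique solution in $\L_{2e}$, the contraction only has to certify that this solution in fact lies in $\L_2$. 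Covering $[0,1]$ from $0$ in finitely many such uniform steps then forces $1\in S$, i.e.\ $\sys{G}{\Delta}$ is stable.

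The hard part will be the a priori bound: the cancellation of the indefinite $(y,u)$ terms and the bookkeeping of the remaining cross terms must be arranged so that the surviving $-\epsilon\|u\|_2^2$ genuinely controls $\|u\|_2$ uniformly in $\alpha$. A secondary technical point is reconciling the $\L_{2e}$ well-posedness hypothesis with the $\L_2$ statements of \eqref{eq:stab_IQC_Delta}--\eqref{eq:stab_G}; I would handle this by invoking the constraints only at parameter values where stability already places the signals in $\L_2$, and by using causality of $G$ and $\Delta$ to pass between truncated and full signals where the limiting argument requires it.
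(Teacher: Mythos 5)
Your overall strategy is the right one, and it is worth noting that the paper itself offers no proof of this theorem: it is quoted from \cite{megretski1997system}, and what you have reconstructed is essentially Megretski and Rantzer's original homotopy argument. The set $S$, the base case $\alpha=0$, the $\alpha$-independent a priori bound --- obtained by evaluating \eqref{eq:stab_IQC_Delta} at the pair $(y,u-d_u)$, \eqref{eq:stab_G} at $(y-d_y,u)$, subtracting so that the indefinite pure $(y,u)$ terms cancel, and closing the resulting quadratic inequality in $\|u\|_2$ --- and the insistence that the continuation step size be uniform (which is exactly why the bound must not depend on $\alpha$) are precisely the ingredients of the cited proof, and your bookkeeping for them is sound.

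The one step that would fail as written is the perturbation step. You assert that the map $y\mapsto$ (first component of) $\sys{G}{\alpha_0\Delta}(d_y,\,d_u+(\alpha-\alpha_0)\Delta(y))$ is a contraction once $|\alpha-\alpha_0|\,\|\Delta\|\,C<1$. A contraction estimate requires an \emph{incremental} (Lipschitz) bound, but stability in this setting is plain finite gain, $\|\Delta\|=\sup_{0\neq y}\|\Delta(y)\|_2/\|y\|_2$, and both $\Delta$ and $\sys{G}{\alpha_0\Delta}$ are allowed to be nonlinear; neither need be Lipschitz on $\L_{2\,}$, so Banach's fixed-point theorem is unavailable. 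The repair is the one you gesture at in your closing sentence, and it requires no iteration at all: well-posedness at $\alpha$ already provides the unique solution $(y,u)\in\L_{2e}^p\times\L_{2e}^q$, and by uniqueness of solutions of the $\alpha_0$-loop, $(y,u)=\sys{G}{\alpha_0\Delta}(d_y,\tilde{d}_u)$ with $\tilde{d}_u:=d_u+(\alpha-\alpha_0)\Delta(y)\in\L_{2e}^q$. Now truncate: causality of $\sys{G}{\alpha_0\Delta}$ and of $\Delta$, together with the $\alpha$-uniform gain bound $C$ valid at $\alpha_0\in S$, give
\begin{equation*}
\|\boldsymbol{\pi}_\tau(y,u)\|_2 \;\le\; C\,\|(d_y,d_u)\|_2 \;+\; C\,|\alpha-\alpha_0|\,\|\Delta\|\,\|\boldsymbol{\pi}_\tau(y,u)\|_2,
\end{equation*}
where the left-hand side is finite because truncations of $\L_{2e}$ signals lie in $\L_{2\,}$. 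When $C\,|\alpha-\alpha_0|\,\|\Delta\|<1$ you may rearrange and let $\tau\to\infty$, concluding that $(y,u)\in\L_{2\,}^p\times\L_{2\,}^q$ with a bound independent of $\alpha$. With this substitution for the contraction, your uniform-step covering of $[0,1]$, and hence the whole proof, goes through.
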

\begin{remark}
    The complementary constraints in \eqref{eq:stab_IQC_Delta} and \eqref{eq:stab_G} are IQCs. 
\end{remark}

\subsection{Graphs}
Let $\G=(\V,\E)$ be a simple (self-loopless and undirected) graph, where $\V=[1:n]$ is the set of $n\in\mathbb{N}$ vertices, and $\E\subset\{\{i,j\}~|~i,j\in\V\}$ is the set of $m=|\E|\in\mathbb{N}$ edges. The set $\N_i:=\{j~|~\{i,j\}\in\E\}$ comprises the neighbours of $i\in\V$. Further, $\E_i:=\{\{i,j\}~|~j\in\mathcal{N}_i\}$ is the neighbourhood edge set, $\G_i:=\G[\E_i]$ is the $\E_i$-induced sub-graph of $\G$, and bijective $\kappa_{\E_i}:\E_i\rightarrow [1:m_i]$ denotes a fixed enumeration of $\E_i$. Similarly, bijective $\kappa_{\E}:\E\rightarrow [1:m]$ denotes a fixed enumeration of the edge set $\E$.
%
%

\section{Networked system model}
\label{sec:ideal}

Consider a network of $n\in \naturals$ dynamic agents, coupled according to the simple graph $\G=(\V,\E)$. The vertex set $\V:=[1:n]$ corresponds to a fixed enumeration of the agents, and  
$m:=|\E|$ is the number of edges, defined such that $\{i,j\}\in\E$ if the output of agent $i\in\V$ is shared as an input to agent $j\in\V$, and vice-versa; see Figure~\ref{fig:example_net}. It is assumed that the number of neighbours $m_i:=|\N_i|\geq 1$ for all $i\in\V$. To tame the notation, each agent has a single output, and multiple inputs, one for each neighbour. The corresponding input-output system  $H_i:\L_{2e}^{m_i}\rightarrow \L_{2e}$ is taken to be linear and stable, with vector input signal coordinate order fixed by the neighbourhood edge-set enumerations $\kappa_{\E_i}$.

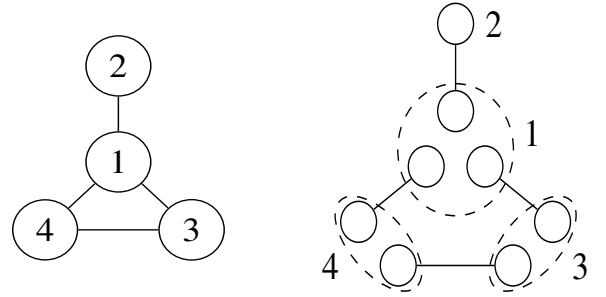
\begin{figure}[htbp]
\begin{minipage}{.48\linewidth}
\centering
\resizebox{3cm}{3cm}{
\begin{tikzpicture}[main/.style = {draw, circle}] 
\node[main] (1) {1}; 
\node[main] (2) [above of=1] {2};
\node[main] (3) [below right of=1] {3}; 
\node[main] (4) [below left of=1] {4};
\draw (1) -- (2);
\draw (1) -- (3);
\draw (1) -- (4);
\draw (4) -- (3);
\end{tikzpicture}} 
\end{minipage}
\hfill
\begin{minipage}{.48\linewidth}
\centering
\resizebox{4cm}{4cm}{
\begin{tikzpicture}[main/.style = {draw, circle}] 

\node[main] (1) {\,}; 
\node[main] (2) [below of=1,yshift=2mm] {\,};
\node[main] (3) [below right of=2,yshift=2mm,xshift=-4mm] {\,}; 
\node[main] (4) [below left of=2,yshift=2mm,xshift=4mm] {\,};
\node[main] (5) [below left of=4, yshift=+2mm] {\,};
\node[main] (6) [below right of=4, yshift=-2mm,xshift=-10mm] {\,};
\node[main] (7) [below left of=3,yshift=-2mm,xshift=10mm] {\,};
\node[main] (8) [below right of=3,yshift=+2mm] {\,};

\node [rotate=43][draw,dashed,inner sep=0pt, circle,yscale=.5, fit={(7) (8)}] {};
\node [draw,dashed,inner sep=0pt, circle,yshift=-1mm,xscale=.9,yscale=.9, fit={(2) (3) (4)}] {};
\node[rotate=-43][draw,dashed,inner sep=0pt, circle,yscale=.5, fit={(5) (6)}] {};

\node [below of=1, xshift=8mm] {1};
\node [right of=1, xshift=-6mm] {2};
\node [below left  of=8, yshift=3mm, xshift=10mm] {3};
\node [below right of=5, yshift=3mm, xshift=-10mm] {4};

\draw (1) -- (2);
\draw (4) -- (5);
\draw (3) -- (8);
\draw (6) -- (7);

\end{tikzpicture}}
\end{minipage}
\caption{Example network graph $\G$ (left) and corresponding sub-system graph $\G_s$ (right). The sub-system graph has $2m=8$ vertices and $m=4$ edges, while the network graph has $n=4$ nodes and $m=4$ edges.}  
\label{fig:example_net}                    
\end{figure} 

The network of agents can be modelled as the feedback interconnection of structured open-loop systems. To this end, noting that $\sum_{i=1}^n m_i=2m$, define the block diagonal systems
\begin{subequations}
\label{eq:blkdiag}
\begin{align}
H&:=\bigoplus_{i=1}^{n}H_i:\L_{2e}^{2m}\rightarrow \L_{2e}^{n}, \\  
T&:=\bigoplus_{i=1}^{n} \mathbf{1}_{m_{i}\times 1}:\L_{2e}^{n}\rightarrow \L_{2e}^{2m}, \\ 
\text{ and } \quad R&:=\bigoplus_{i=1}^{n} \big(\bigoplus_{k=1}^{m_i} R_{i,k}\big):\L_{2e}^{2m}\rightarrow\L_{2e}^{2m},
\end{align}
\end{subequations}
 where $\mathbf{1}_{m_{i}\times 1}:\L_{2e}\rightarrow\L_{2e}^{m_i}$ denotes pointwise multiplication by the matrix $\mathbf{1}_{m_{i}\times 1}$, and the system $R_{i,k}:\L_{2e}\rightarrow \L_{2e}$ represents the stable but possibly nonlinear link dynamics from agent $i\in\V$ to the neighbour $j\in\N_i$ for which $\{i,j\}=\kappa_{\E_i}^{-1}(k)$. For an {\em ideal} link $R_{i,k}$, is pointwise multiplication by the scalar $1$. Given these components, the networked system corresponds to the structured feedback interconnection $[\![P, R \circ T \circ H]\!]$ shown in the top left of Figure~\ref{fig:transfloop}, where $P:\L_{2e}^{2m}\rightarrow \L_{2e}^{2m}$ is pointwise multiplication by the permutation matrix that `routes' the link outputs to the agent inputs in accordance with $\G$, and the fixed neighbourhood edge-set enumerations $\kappa_{\E_i}$. More specifically, for each $i\in\V$, $k\in[1:m_i]$, and $r\in[1:2m]$, the corresponding entry of this permutation matrix is given by
 \begin{align} \label{eq:permute}
 P_{(\sum_{h=1}^{i} m_h-m_i+k,\,r)}\!=\!
 \begin{cases} 1 & \text{if}~\!r\!=\!
 \sum_{h=1}^j \!m_h \!-\! m_j
 \!+\!\kappa_{\E_j}(\!\{i,j\}\!)
 \\ &\text{with } j\in\kappa_{\E_i}^{-1}(k)\setminus \{i\},\\
 0 & \text{otherwise}.
 \end{cases}
 \end{align}
 
\begin{figure}[tbp]
\hspace{0pt} 
\begin{minipage}{.48\linewidth}
\centering
\hspace{-5pt} 
\resizebox{4cm}{2cm}{\begin{tikzpicture}
 
\node [draw,
    minimum width=1.2cm,
    minimum height=1.2cm,
]  (permutation) at (0,0) {$P$};
 
\node [draw,
    minimum width=2cm, 
    minimum height=1.2cm, 
    below=.5cm  of permutation
]  (nets) {$R\circ T \circ H$};

\node[draw,
    circle,
    minimum size=0.6cm,
    left=0.8cm  of nets
] (sum){};
 
\draw (sum.north east) -- (sum.south west)
(sum.north west) -- (sum.south east);
 
\node[left=-1pt] at (sum.center){\tiny $+$};
\node[above=-1pt] at (sum.center){\tiny $+$};


\node[draw,
    circle,
    minimum size=0.6cm,
  right= 1.2cm   of permutation 
] (sum2){};
 
\draw (sum2.north east) -- (sum2.south west)
(sum2.north west) -- (sum2.south east);
 
\node[right=-1pt] at (sum2.center){\tiny $+$};
\node[below=-1pt] at (sum2.center){\tiny $+$};

\draw[-stealth] (sum.east) -- (nets.west)
    node[midway,above]{$v$};
 
\draw[-stealth] (permutation.west) -| (sum.north);
 
\draw [stealth-] (sum.west) -- ++(-0.5,0) 
    node[midway,above]{$d_{v}$};

\draw[-stealth]  (nets.east) -|  (sum2.south);
 
\draw[-stealth] (sum2.west) -- (permutation.east) 
    node[midway,above]{$w$};
 
\draw [stealth-] (sum2.east) -- ++(0.5,0) 
    node[midway,above]{$d_{w}$}; 
\end{tikzpicture}} 
\! \vspace{5 pt} \, 
\resizebox{4.5cm}{2.25cm}{\begin{tikzpicture}
 
\node [draw,
    minimum width=1.4cm,
    minimum height=1.2cm,
]  (permutation) at (0,0) {$H \circ P$};
 
\node [draw,
    minimum width=1.4cm, 
    minimum height=1.2cm, 
    below=.5cm  of permutation
]  (nets) {$R\circ T$};

\node[left=1cm  of nets] (sum){};


\node[draw,
    circle,
    minimum size=0.6cm,
  right= 1cm   of permutation 
] (sum2){};
 
\draw (sum2.north east) -- (sum2.south west)
(sum2.north west) -- (sum2.south east);
 
\node[right=-1pt] at (sum2.center){\tiny $+$};
\node[below=-1pt] at (sum2.center){\tiny $+$};

\draw[-stealth] (sum.center) -- (nets.west)
    node[midway,above]{$\tilde{v}$};
 
\draw(permutation.west) -| (sum.center);

\draw[-stealth]  (nets.east) -|  (sum2.south);
 
\draw[-stealth] (sum2.west) -- (permutation.east) 
    node[midway,above]{$\tilde{w}$};
 
\draw [stealth-] (sum2.east) -- ++(0.5,0) 
    node[midway,above]{$\qquad \qquad d_w+P^{-1}d_v$};
\end{tikzpicture}} 
\end{minipage}
\hspace{3pt} 
\begin{minipage}{.46\linewidth}
\centering
\resizebox{4.2cm}{5cm}{\begin{tikzpicture}
 
\node [draw,
    minimum width=1.4cm,
    minimum height=1.2cm,
]  (permutation) at (0,0) {$H \circ P$};
 
\node [draw,
    minimum width=1.4cm, 
    minimum height=1.2cm, 
    below=4.9cm  of permutation
]  (nets) {$R\circ T$};

\node [draw,
    minimum width=1.0cm, 
    minimum height=1.0cm, 
    below=.3cm  of permutation
]  (T1) {$T$};

\node [draw,
    minimum width=1.0cm, 
    minimum height=1.0cm, 
    above=.3cm  of nets
]  (T2) {$T$};

\node[left=1cm  of nets] (n1){};

\node[above=1cm  of nets] (n2){};

\node[left=1cm  of n2] (n3){};

\node[left=1cm  of permutation] (n4){};

\node[right=1cm  of permutation] (n5){};

\node[left=1.2cm  of T1] (n6){};

\node[left=1.2cm  of T2] (n7){};

\node[right=1.2cm  of T1] (n8){};

\node[right=1.2cm  of T2] (n9){};

\node[left=1cm  of nets] (n10){};

\node [ 
    above=0.8cm  of n10
]  (S) {};

\node [
    below=0.8cm  of n4
]  (Sda) {};


\node[draw,
    circle,
    minimum size=0.6cm,
  below=2.7cm of n5
] (sum2){};
 
\draw (sum2.north east) -- (sum2.south west)
(sum2.north west) -- (sum2.south east);
 
\node[right=-1pt] at (sum2.center){\tiny $+$};
\node[below=-1pt] at (sum2.center){\tiny $+$};


\node[draw,
    circle,
    minimum size=0.6cm,
  right=1.0cm of T2
] (sum3){};
 
\draw (sum3.north east) -- (sum3.south west)
(sum3.north west) -- (sum3.south east);
 
\node[left=-1pt] at (sum3.center){\tiny $-$};
\node[below=-1pt] at (sum3.center){\tiny $+$};


\node[draw,
    circle,
    minimum size=0.6cm,
  right=1.0cm of T1
] (sum4){};
 
\draw (sum4.north east) -- (sum4.south west)
(sum4.north west) -- (sum4.south east);
 
\node[left=-1pt] at (sum4.center){\tiny $+$};
\node[below=-1pt] at (sum4.center){\tiny $+$};

\draw[-stealth] (n1.center) -- (nets.west)
    node[midway,above]{$\tilde{v}$};
 
\draw(permutation.west) -| (Sda.center);

\draw(Sda.center) -| (S.center);

\draw(S.center) -- (n1.center);

\node[above=1.6cm of S] (ttS) {}; 
\node[left=-0.1 of ttS] {$y$};
\node[above=0.2cm of sum2] (ttsum) {};
\node[right=-0.1cm of ttsum] {$u$};

\draw[-stealth]  (nets.east) -|  (sum3.south);
 
\draw[-stealth] (n5.center) -- (permutation.east) 
    node[midway,above]{$\tilde{w}$};
    
\draw[-stealth] (n6.center) -- (T1.west);
\draw[-stealth] (n7.center) -- (T2.west);

\draw[stealth-] (sum4.west) -- (T1.east);
\draw[stealth-] (sum3.west) -- (T2.east);

\draw[-stealth] (sum2.north) -- (sum4.south);
 
\draw[-stealth] (sum3.north) -- (sum2.south);

\draw (n5.center) -- (sum4.north) 
    node[midway,above]{};
 
\draw [stealth-] (sum2.east) -- ++(0.5,0) 
    node[right,above]{$d$};

 \node[draw,dashed,inner sep=7pt, yscale=1, fit={(T1) (Sda) (permutation) (sum4)}] (G) {};

 \node[draw,dashed,inner sep=7pt, yscale=1, fit={(T2) (S) (nets) (sum3)}] (D) {};

\node[left=0.25cm of D] (D1) {$\Delta$};
\node[left=0.25cm of G] (G1) {$G$};
\node[midway,above]{$\hspace{-2.5 cm}{\vspace{0.2cm}\tilde{v}}$};
 
 
\end{tikzpicture}}
\end{minipage}
    \caption{Networked system model $[\![P,R\circ T\circ H]\!]$, and loop transformations for robust stability analysis.}
    %
    \label{fig:transfloop}      
\end{figure}
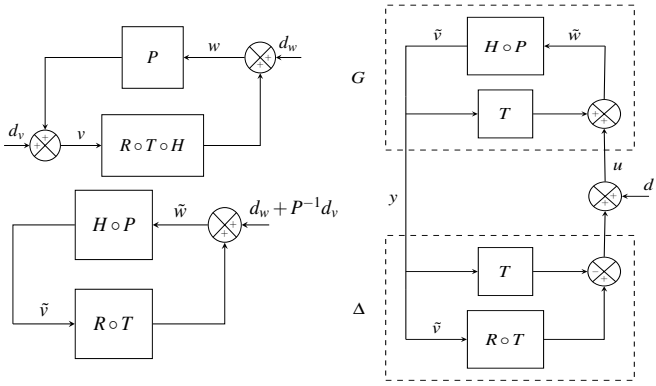
 
The matrix $P=P^\prime=P^{-1}$ is the adjacency matrix of an undirected $1$-regular sub-system graph 
 $\widetilde{\G}:=(\widetilde{\V},\widetilde{\E})$, with $2m$ vertices and $m$ edges. 
 More specifically, the disjoint union $\bigsqcup_{e\in\E}\G[e]$ of the two-vertex sub-graphs induced by each edge of the network graph $\G=(\V,\E)$; see Figure~\ref{fig:example_net}. As such, 
\begin{equation}
 \label{eq:Padjacency}
 P=D-L=I_{2m}-
 \sum_{k=1}^{m} L_k,
 \end{equation}
where the degree matrix $D=I_{2m}$ because $\widetilde{\G}$ is 1-regular, and the graph Laplacian decomposes as
 \begin{equation}
 \label{eq:Laplacian_Subsystem_Graph}
  L=\sum_{k=1}^{m}L_k
  =\sum_{k=1}^{m}B_{(\cdot,k)}B_{(\cdot,k)}^\prime
  =B B^\prime,   
 \end{equation}
 with incidence matrix $B\in\mathbb{R}^{2m\times m}$ 
 defined by 
$B_{(r,k)}:=1$, $B_{(s,k)}:=-1$, and $B_{(l,k)}:=0$ for $l \in [1:2m]\setminus\{r,s\}$, with $\{r,s\}=\kappa^{-1}_{\widetilde{\E}}(k)$, for $k\in[1:m]$. The enumeration $\kappa_{\widetilde{\E}}$ is taken to be compatible with the enumerations $\kappa_{\E_i}$, and the definition of $P$ in \eqref{eq:permute}. The edge orientation is arbitrary. 
 
 As above, for convenience throughout, the system corresponding to pointwise multiplication by a matrix is not distinguished from the matrix. 
Since $P=P^{-1}:\L_{2e}\rightarrow\L_{2e}$ are linear and stable systems, 
stability of $[\![P, R\circ T\circ H]\!]$ is equivalent to stable invertibility of $(I_{2m}-(R\circ T)\circ(H\circ P))$, where the parenthetic grouping separates the link dynamics from the agent dynamics. The equivalence, illustrated in the left-hand side of Figure~\ref{fig:transfloop}, is formally established below.
\begin{lemma} \label{lem:equiv1}
   $[\![P, R\circ T\circ H]\!]$ is stable if, and only if, the stable system
   $(I_{2m}-(R\circ T)\circ(H\circ P))$ has a stable inverse.
\end{lemma}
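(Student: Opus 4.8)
The plan is to reduce the claim to Lemma~\ref{lem:stabstab} and then to strip the permutation $P$ off the loop operator by conjugation. First I would apply Lemma~\ref{lem:stabstab} with $G := P$ and $\Delta := R\circ T\circ H$, both viewed as maps on $\L_{2e}^{2m}$. The hypotheses are readily checked: $P$ is linear and stable, being pointwise multiplication by a permutation matrix (hence static, causal, and bounded with $\|P\|=1$); and $R\circ T\circ H$ is stable as the composition of the stable systems $H$ (stable by assumption), $T$ (a bounded static linear map), and $R$ (a direct sum of the stable link systems $R_{i,k}$, hence stable). Lemma~\ref{lem:stabstab} then yields that $[\![P,R\circ T\circ H]\!]$ is stable if, and only if, $(I_{2m}-P\circ(R\circ T\circ H))$ has a causal inverse that is also stable.

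Writing $M := R\circ T\circ H$, the remaining task is to relate the stable invertibility of $(I_{2m}-P\circ M)$ to that of $(I_{2m}-M\circ P)=(I_{2m}-(R\circ T)\circ(H\circ P))$, the last equality being mere associativity of composition. Here I would exploit the property $P=P^{-1}$ recorded in~\eqref{eq:Padjacency}, equivalently $P\circ P=I_{2m}$, to obtain the conjugation identity
\begin{equation}
I_{2m}-P\circ M = P\circ\big(I_{2m}-M\circ P\big)\circ P ,
\end{equation}
which follows by expanding the right-hand side and cancelling the adjacent factors $P\circ P=I_{2m}$.

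Finally I would argue that conjugation by $P$ preserves stable causal invertibility. If $(I_{2m}-M\circ P)$ admits a stable causal inverse $S$, then $P\circ S\circ P$ inverts $(I_{2m}-P\circ M)$: substituting the identity above and using $P\circ P=I_{2m}$ gives $(I_{2m}-P\circ M)\circ(P\circ S\circ P)=P\circ\big[(I_{2m}-M\circ P)\circ S\big]\circ P = P\circ I_{2m}\circ P = I_{2m}$, and symmetrically on the other side. Moreover $P\circ S\circ P$ is causal and stable, since $P$ is causal, linear and stable, and both properties are preserved under composition. The converse implication is identical with the two operators interchanged. Chaining the three steps delivers the stated equivalence.

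I expect the only delicate point to be this last step: one must ensure that the conjugation respects causality and $\L_2$-boundedness of the inverse simultaneously, rather than merely invertibility as an algebraic map on $\L_{2e}^{2m}$. This is precisely where the special structure of $P$---static, causal, bounded, and self-inverse---is used, so no genuine difficulty arises; but it is the place where an argument that ignored the extended-space and causality structure would fail.
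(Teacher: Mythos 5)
Your proposal is correct and follows essentially the same route as the paper: apply Lemma~\ref{lem:stabstab} with $G=P$ and $\Delta=R\circ T\circ H$, then use the conjugation identity $(I_{2m}-P\circ(R\circ T\circ H))=P\circ(I_{2m}-(R\circ T)\circ(H\circ P))\circ P^{-1}$ together with $P=P^{-1}$ to transfer stable invertibility. The only difference is that you spell out explicitly why conjugation by the static, causal, self-inverse $P$ preserves causal stable invertibility, a step the paper leaves implicit.
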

\begin{proof}
Since $R\circ T\circ H: \L_{2e}^{2m}\rightarrow \L_{2e}^{2m}$ and $P:\L_{2e}\rightarrow\L_{2e}$ are both stable by hypothesis, and $P$ is linear, Lemma~\ref{lem:stabstab} applies to give $[\![P, R\circ T\circ H]\!]$ is stable if, and only if, $(I-P\circ(R\circ T \circ H))$ has a stable inverse. Further, since $P^{-1}=P$, it follows that $(I-P\circ(R\circ T \circ H)) =P\circ (I-(R\circ T)\circ(H\circ P)) \circ P^{-1}$, whereby stable invertibility of $(I-P\circ(R\circ T \circ H))$ is equivalent to stable invertibility of $(I-(R\circ T)\circ(H\circ P))$.
\end{proof}
 
The focus of subsequent developments is on the robustness of network stability to link uncertainty specified with respect to the ideal unity link. To this end, define 
\begin{equation}
\label{eq:Delta}
    \Delta:= (R-I_{2m}) \circ T,
\end{equation}
with the stable systems $R$ and $T$ as per \eqref{eq:blkdiag}, whereby $\Delta:\L_{2e}^{n}\rightarrow \L_{2e}^{2m}$ is stable. Then, provided $(P - T \circ H )$ has a stable inverse, with $H$ and $P$ as per \eqref{eq:blkdiag} and \eqref{eq:permute}, stable invertibility of $(I-(R\circ T)\circ(H\circ P))$, and therefore, stable network dynamics, is implied by stability of the feedback interconnection $[\![G,\Delta]\!]$, where 
\begin{equation}
\label{eq:G}
G := 
H \circ (P - T\circ H )^{-1}.
\end{equation}
This implication, illustrated on the right-hand side of Figure~\ref{fig:transfloop}, is established more formally below.
\begin{assumption}
\label{ass:coprime} 
 The network with ideal links is stable in the sense that $[\![P, T\circ H]\!]$ 
 is stable.
\end{assumption}
\begin{remark} \label{rem:nomstab}
     In view of Lemma~\ref{lem:stabstab}, stability of the ideal network $[\![P, T\circ H]\!]$ is equivalent to stable invertibility of $(I_{2m}-P\circ T\circ H)$, and thus, stable invertibility of $(P-T\circ H) =  P\circ( I_{2m}-P\circ T\circ H )$, by the linearity of $P^{-1}=P$. So under Assumption~\ref{ass:coprime}, $G$ in \eqref{eq:G} is stable.
     \end{remark}
\begin{theorem}
\label{thm:stable_sampledH}
Under Assumption~\ref{ass:coprime}, if 
$[\![G,\Delta]\!]$ is stable, with $G$ as per \eqref{eq:G}, and $\Delta$ as per \eqref{eq:Delta}, then the networked system model $[\![P,R\circ T \circ H]\!]$ is stable. Further, when $\Delta$ is also linear, stability of $[\![P,R\circ T \circ H]\!]$ implies stability of $[\![G,\Delta]\!]$.  
\end{theorem}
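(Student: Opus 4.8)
The plan is to reduce both stability claims to statements about stable invertibility of operators using the two equivalences already established, and then connect these via a coprime-type factorization together with an operator push-through identity.

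First I would invoke Lemma~\ref{lem:equiv1} to replace stability of $\sys{P}{R\circ T\circ H}$ by stable invertibility of $(I_{2m}-(R\circ T)\circ(H\circ P))$, and Lemma~\ref{lem:stabstab} to replace stability of $\sys{G}{\Delta}$ by stable invertibility of $(I_n-G\circ\Delta)$. The latter lemma applies because $G$ in \eqref{eq:G} is linear and, by Remark~\ref{rem:nomstab}, stable under Assumption~\ref{ass:coprime}, while $\Delta$ in \eqref{eq:Delta} is stable. This turns the theorem into a relationship between two invertibility properties.

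Next, using $R\circ T=T+\Delta$ from \eqref{eq:Delta}, the right-distributivity of composition over sums, and $P\circ P=I_{2m}$, I would derive the factorization
\[
I_{2m}-(R\circ T)\circ(H\circ P)=(P-T\circ H)\circ(I_{2m}-N\circ H)\circ P,
\]
where $N:=(P-T\circ H)^{-1}\circ\Delta$. The only delicate point is pulling $(P-T\circ H)$ out on the left, which is legitimate precisely because $P$, $T$, and $H$ are linear, so $P-T\circ H$ is linear and, under Assumption~\ref{ass:coprime}, boundedly and stably invertible (Remark~\ref{rem:nomstab}). Since $P$ and $P-T\circ H$ are linear, stable, and stably invertible, stable invertibility of the left-hand side is equivalent to that of the middle factor $(I_{2m}-N\circ H)$. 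Observing that $G\circ\Delta=H\circ(P-T\circ H)^{-1}\circ\Delta=H\circ N$, the theorem reduces to showing that $(I_{2m}-N\circ H)$ is stably invertible if and only if $(I_n-H\circ N)$ is stably invertible, with the two implications carrying the two distinct hypotheses.

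This last equivalence is an operator push-through identity, and it is the crux. For the sufficiency claim I would assume $(I_n-H\circ N)$ has a stable causal inverse and verify directly that $\Phi:=I_{2m}+N\circ(I_n-H\circ N)^{-1}\circ H$ is a two-sided, stable, causal inverse of $(I_{2m}-N\circ H)$; the verification uses only linearity of $H$ to move $H$ through the sum defining $\Phi$, which always holds, so no linearity of $\Delta$ is needed here. For the necessity claim the roles of $H$ and $N$ are swapped: assuming $(I_{2m}-N\circ H)$ stably invertible, the candidate inverse of $(I_n-H\circ N)$ is $I_n+H\circ(I_{2m}-N\circ H)^{-1}\circ N$, and the analogous verification now consumes linearity of $N$. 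Since $(P-T\circ H)^{-1}$ is linear, $N$ is linear exactly when $\Delta$ is linear, which is precisely the extra hypothesis of the second statement. The main obstacle is therefore getting this push-through identity right in the possibly nonlinear setting and pinpointing which operator's linearity each direction requires; the resulting asymmetry is exactly what forces the linearity of $\Delta$ in the converse.
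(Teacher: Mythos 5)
Your proposal is correct, and it takes a genuinely different route from the paper's own proof. Both arguments start from Lemma~\ref{lem:equiv1}, but the paper then works at the level of signals rather than operators: assuming $\sys{G}{\Delta}$ is stable, it takes the closed-loop solution $(y,u)$ driven by a disturbance $d$, forms $\tilde{w}=P\circ(P-T\circ H)^{-1}u$, verifies pointwise that $(I_{2m}-(R\circ T)\circ(H\circ P))(\tilde{w})=d$, and so obtains a stable causal right inverse; two-sidedness is then established by an injectivity-by-contradiction argument that leans on well-posedness of $\sys{G}{\Delta}$. In the converse direction the paper builds $u=\tilde{w}-Ty$ from the network solution, obtains a stable inverse of $(I_{2m}-\Delta\circ G)$ on the \emph{input} space, and only then invokes Lemma~\ref{lem:stabstab} with the roles of $G$ and $\Delta$ swapped, i.e.\ with $\Delta$ as the linear element --- that is exactly where linearity of $\Delta$ enters the paper's proof. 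You instead invoke Lemma~\ref{lem:stabstab} once, on the \emph{output} space with $(I_n-G\circ\Delta)$, where linearity of $G$ (which holds under Assumption~\ref{ass:coprime} since $(P-T\circ H)^{-1}$ is linear) makes the equivalence unconditional on $\Delta$, and then shift all the work into the factorization $I_{2m}-(R\circ T)\circ(H\circ P)=(P-T\circ H)\circ(I_{2m}-N\circ H)\circ P$ with $N=(P-T\circ H)^{-1}\circ\Delta$, plus the push-through equivalence between $(I_{2m}-N\circ H)$ and $(I_n-H\circ N)$, whose explicit two-sided inverse formulas replace the paper's injectivity arguments. Your bookkeeping of where linearity is consumed checks out: right-distributivity of composition over sums is free, left-distributivity is used only through the linear maps $P$, $T$, $H$ in the factorization and the forward push-through, and through $N$ (equivalently $\Delta$, since $(P-T\circ H)^{\pm 1}$ are linear bijections) only in the reverse push-through. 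What your route buys is modularity and transparency: the asymmetry forcing linearity of $\Delta$ in the converse is isolated in a single algebraic identity, and stability and causality of the constructed inverses are immediate from closure under sums and compositions. What the paper's route buys is that it never has to assert operator identities involving nonlinear maps (where misapplied distributivity is an easy error), staying instead with the defining feedback equations and uniqueness of their solutions.
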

\begin{proof}
By Lemma~\ref{lem:equiv1}, the networked system $[\![P,R\circ T \circ H]\!]$ is stable if, and only if, $(I_{2m} - (R\circ T)\circ(H\circ P))$ has a stable inverse. 

With reference to Figure~\ref{fig:transfloop}, suppose $[\![G,\Delta]\!]$ is stable, so that for every $d\in\L_{2e}^{2m}$, there exists unique $(y,u)\in\L_{2e}^n \times \L_{2e}^{2m}$, with causal dependence on $d$, such that $u = (\Delta\circ G)(u) + d = \Delta(y) + d = -Ty + (R\circ T)(y) + d$, and $y = Gu = H\circ(P - T\circ H)^{-1}u$. Further, $\|u\|_2 \leq c_u \|d\|_2$, and $\|y\|_2 \leq c_y \|d\|_2$ when $d\in\L_{2\,}^{2m}$, for some fixed $c_u,c_y\in\mathbb{R}_{>0}$. In particular, $u \mapsto d$ is a bijection $\L_{2e}^{2m}\rightarrow\L_{2e}^{2m}$, and the inverse $d \mapsto u$ is stable. Let $\tilde{w} := (I_{2m} - T \circ H\circ P)^{-1}u = P \circ (P-T\circ H)^{-1} u$. In view of Assumption~\ref{ass:coprime}, and Remark~\ref{rem:nomstab}, $\tilde{w}\in\L_{2\,}^{2m}$, and $\|\tilde{w}\|_2 \leq c_w \|u\|_2 \leq c_w\cdot c_u \|d\|_2$ when $d\in\L_{2\,}^{2m}$, for some fixed $c_w\in\mathbb{R}_{>0}$. Observe that $\tilde{w}$ depends causally on $u$, and thus, on $d$. Further, $(H\circ P)\tilde{w} = Gu= y$, since $P^{-1}=P$, and $u = \tilde{w} - (T\circ H\circ P) \tilde{w} = \tilde{w} - Ty$. As such, $\tilde{w}-Ty = u= -Ty + ((R\circ T)\circ(H\circ P))(\tilde{w}) + d$, and thus, $(I-(R\circ T)\circ(H\circ P))(\tilde{w}) = d$. So $(d \mapsto \tilde{w})$ is a stable right inverse of $(I-(R\circ T)\circ(H\circ P)) = (\tilde{w}\mapsto d)$. That it is also the left inverse follows from the fact that $(I-(R\circ T)\circ(H\circ P))$ is injective. This can be seen by contradiction. Suppose there exist $\breve{w} \neq \tilde{w} 
\in\L_{2e}^{2m}$ such that 
$(I-(R\circ T)\circ(H\circ P))(\breve{w})=(I-(R\circ T)\circ(H\circ P))(\tilde{w})$, then $d:=\breve{w} - (R\circ T)(\breve{y}) = \tilde{w} - (R\circ T)(\tilde{y})$ with $\breve{y}=(H\circ P)\breve{w}$, and $\tilde{y}=(H\circ P)\tilde{w}$. Further, $\breve{u}=\breve{w}-T\breve{y}$ and $\tilde{u}=\tilde{w}-T\tilde{y}$ are such that $G\breve{u}=\breve{y}$, $G\tilde{u}=\tilde{y}$, $\breve{u}=\Delta(\breve{y})+d$, and $\tilde{u}=\Delta(\tilde{y})+d$, whereby $\breve{u}=\tilde{u}$ and $\breve{y}=\tilde{y}$ since $[\![G,\Delta]\!]$ is well-posed. Therefore, $\breve{w}-\tilde{w}=(R\circ T)(\breve{y})-(R\circ T)(\tilde{y})=0$, which is a contradiction. 

Similarly, if $(I-(R\circ T)\circ(H\circ P))$ has a stable inverse, then for every $d\in\L_{2e}^{2m}$, there exists unique $\tilde{w}= ((R\circ T)\circ(H\circ P))(\tilde{w}) + d =(R\circ T)(\tilde{v}) + d$, where $\tilde{v} = (H\circ P)\tilde{w}$. The map $d \mapsto \tilde{w}$ is causal, and $\|\tilde{w}\|_2 \leq c \|d\|_2$ when $d\in\L_{2\,}^{2m}$, for some fixed $c\in\mathbb{R}_{>0}$. As above, $u=\tilde{w}-Ty$, with $y=(H\circ P)\tilde{w}$, is such that $y=Gu$, and $u=\Delta(y)+d = (\Delta\circ G)(u)+d$. So $(d\mapsto u)$ is a right inverse of $(I_{2m}-\Delta\circ G)$. It is also the left inverse, since it can be established that $(I_{2m}-\Delta\circ G)$ is injective, for otherwise the uniqueness of $\tilde{w}$ can be contradicted along similar lines to the related argument above. Therefore, $(I_{2m}-\Delta\circ G)$ has a stable inverse, which implies stability of $[\![G,\Delta]\!]$ by Lemma~\ref{lem:stabstab}, if $\Delta$ is also linear.
\end{proof}

\begin{remark} 
\label{rem:coprime}
Let 
\begin{gather}
    \label{eq:coprimeNM}
    N:= H, \quad \mathrm{and} \quad    M:=(P-T\circ H),
\end{gather}
with $T$, $H$, and $P$, as in \eqref{eq:blkdiag} and \eqref{eq:permute}. Under Assumption~\ref{ass:coprime}, 
$(N,~M)$ is
a right coprime factor pair for the stable system 
$G=
H\circ (P-T\circ H)^{-1}$ in \eqref{eq:G}. In particular,
$N$ and $M$ are stable, 
$M^{-1}$ is stable,
and 
$\mathbf{0}_{2m\times n}\circ N + M^{-1}\circ M = I_{2m}$; see~\cite{vidyasagar1985control} for more on coprime factorizations and feedback. Importantly, these coprime factor pairs are structured, and $y = Gu $ with $(y,u)\in \L_{2\,}^n\times \L_{2\,}^{2m}$ if, and only if, there exists $z\in\L_{2\,}$ such that 
$u = Mz$, and $y = Nz$.
These features are exploited subsequently to enable structured application of Theorem~\ref{thm:robust_stability}, despite the potential lack of structure in $G$ due to the factor $(P-T\circ H)^{-1}$, which may become unstructured even though $P$, and $T\circ H$, are structured. 
\end{remark}

\section{Robust stability analysis}
\label{sec:Robust_stability}
Theorem~\ref{thm:robust_stability} is applied below to asses the stability of the uncertain network model $\sys{P}{R\circ T \circ H}$. This is achieved via Theorem~\ref{thm:stable_sampledH}, and analysis of $\sys{G}{\Delta}$ under Assumption~\ref{ass:coprime} regarding stability of the network with ideal links, whereby $G=H\circ(P-T\circ H)^{-1}$ in \eqref{eq:G} is stable. In particular, the developments lead to a collection of conditions that are together sufficient for stability of $\sys{G}{\Delta}$. Decentralized verification of this stability certificate is explored in Section~\ref{sec:struct}. There it is shown how to formulate collectively sufficient conditions, each involving only model data that is local to a corresponding link, and to a component of the uncertainty description, as per the stable system $\Delta=(R-I_{2m})\circ T=\oplus_{i=1}^n\Delta_i$ in \eqref{eq:Delta}, where 
$$\Delta_i = (\oplus_{k=1}^{m_i}R_{i,k} - I_{m_i})\circ \mathbf{1}_{m_i\times 1},\quad i\in\V=[1:n].$$
Recall that $m=|\E|$, and $m_i=|\N_i|$, where $\G=(\V,\E)$ is the network graph, and $\N_i$ is the set of agent $i\in\V$ neighbours.

The approach is underpinned by IQC characterization of the link uncertainty $\Delta$; i.e., the existence of a bounded linear self-adjoint operator $\Phi:(\L_{2\,}^{n} \times \L_{2\,}^{2m}) \rightarrow (\L_{2\,}^{n} \times \L_{2\,}^{2m})$ such that for all $\alpha \in [0,1]$, and $y \in \L_{2\,}^{n}$,
\begin{equation} \label{eq:DeltaIQC}
\innerp{(y,u)}{\Phi(y,u)} \geq 0,
\end{equation}
with $u=\alpha \Delta(y)$. Such $\Phi$ can be constructed from corresponding IQCs $\innerp{(y_i,u_i)}{\Phi_i (y_i,u_i)}\geq 0$ for all $\alpha\in[0,1]$, and $y_i\in\L_{2\,}$, with $u_i=\alpha\Delta_i(y_i)$, 
given bounded linear self-adjoint
$$\Phi_i := 
\begin{bmatrix} \Phi_{1,i} & \Phi_{2,i} \\ \Phi_{2,i}^*
 & \Phi_{3,i} \end{bmatrix}
:\L_{2\,}\times\L_{2\,}^{m_i} \rightarrow 
\L_{2\,} \times \L_{2\,}^{m_i},\quad i\in\V=[1:n];$$
where the superscript $*$ here denotes the Hilbert adjoint, such that $\langle y,\Phi_{2,i} u\rangle = \langle\Phi_{2,i}^* y,u\rangle$ for all $(y,u)\in\L_{2\,}^{n}\times \L_{2\,}^{2m}$ (it exists for all bounded linear operators between Hilbert spaces like $\L_{2\,}$~\cite{Kreyszig}.) Given such IQC characterizations of the uncertain links, the bounded linear self-adjoint operator
\begin{align*}
    \Phi = \begin{bmatrix} \Phi_1 & \Phi_2 \\ \Phi_2^* & \Phi_3 \end{bmatrix}
    =\begin{bmatrix} \bigoplus_{i=1}^n \Phi_{1,i} &  \bigoplus_{i=1}^n \Phi_{2,i} \\ \bigoplus_{i=1}^n\Phi_{2,i}^* & \bigoplus_{i=1}^n \Phi_{3,i} \end{bmatrix}
\end{align*}
is such that \eqref{eq:DeltaIQC} holds with $u=\alpha\Delta(y) = (\bigoplus_{i=1}^n \alpha\Delta_i)(y)$, and $y=(y_1,\ldots,y_n)$. As such, by Theorem~\ref{thm:robust_stability}, 
if there exists $\epsilon\in\mathbb{R}_{>0} $ such that for all $u \in \L_{2\,}^{2m}$,
 \begin{equation}
 \label{eq:stability_l2+}
    	\innerp{\begin{bmatrix}
	  y \\
		u 
        \end{bmatrix}}{\begin{bmatrix} \Phi_1 & \Phi_2 \\ \Phi_2^* & \Phi_3 \end{bmatrix}\begin{bmatrix}
	  y \\
		u 
        \end{bmatrix}} \leq -\epsilon \| u \|_2^2, 
\end{equation}
with $y=Gu$, then $\sys{G}{\Delta}$ is stable, which implies uncertain network stability by Theorem~\ref{thm:stable_sampledH}. As noted in Remark~\ref{rem:coprime}, the potential lack of structure in $G$ translates to a lack of apparent structure in the IQC robust stability certificate \eqref{eq:stability_l2+}. However, given the structured stable coprime factors $N$ and $M$ in \eqref{eq:coprimeNM}, in view of Remark~\ref{rem:coprime}, the existence of $\epsilon\in\mathbb{R}_{>0}$ such that \eqref{eq:stability_l2+} for all $u\in\L_{2\,}^{2m}$, is equivalent to the existence of $\epsilon\in\mathbb{R}_{>0}$ such that for all $z\in\L_{2\,}^{2m}$,
\begin{align}
\label{eq:stabiliy_G_2}
        \innerp{\begin{bmatrix}
	  N \\
		M 
        \end{bmatrix}z}{\begin{bmatrix} \Phi_1 & \Phi_2 \\ \Phi_2^* & \Phi_3 \end{bmatrix}  
        \begin{bmatrix}
	  N \\
		M 
        \end{bmatrix}z}
         \leq -\epsilon \| z \|_2^2,
\end{align} 
since $M$ and $M^{-1}$ are both stable systems, whereby $-\epsilon \|z\|_2^2 \leq -\epsilon \|Mz\|_2^2/\|M\|^2$ and $-\epsilon \|Mz\|_2^2 \leq - \epsilon \|z\|_2^2/\|M^{-1}\|^2$; for convenience, $M$ (resp.~$M^{-1}$) is not distinguished from the restriction to $\L_{2\,}^{2m}$, and likewise for all stable systems henceforth. 

While \eqref{eq:stabiliy_G_2} is structured, the IQC is not directly amenable to decomposition via the approach underlying~\cite[Theorem 1]{pates2016scalable}. The network structure here is contained in the stable coprime factor $M = F - L$, where $L = \sum_{k= 1}^m L_k$ is the sub-system graph Laplacian matrix in \eqref{eq:Laplacian_Subsystem_Graph},  $L_k=B_{(\cdot,k)}B_{(\cdot,k)}^\prime$ for the 
corresponding incidence matrix $B$, and $F:= I_{2m}-T\circ H$, with block-diagonal $T$ and $H$  as per~\eqref{eq:blkdiag}. The other coprime factor $N=H$ is also block diagonal. It follows that \eqref{eq:stabiliy_G_2} is equivalent to
\begin{align}
  \innerp{\begin{bmatrix}
	  I_{2m}\\
		L \\
        \end{bmatrix}z}{\left[ \begin{array}{cc} \Xi_1 & \Xi_2 \\
    	\Xi_2^{*}& \Xi_3\end{array} \right] 
        \begin{bmatrix}
	  I_{2m}\\
		L \\
        \end{bmatrix}z} \leq -\epsilon \| z \|_2^2
       \label{eq:stabiliy_G_3b}
\end{align}
 for $z \in \L_{2\,}^{2m}$, where
\begin{subequations}
    \label{t4Pi}
    \begin{align}
    \Xi_{1}&:= N^*\Phi_{1}N+ N^*\Phi_2F+F^*\Phi_2^*N+F^*\Phi_3 F~,\\
    \Xi_{2}&:=-N^*\Phi_2-F^*\Phi_3~,\\
    \Xi_{3}&:=\Phi_3~.
\end{align}
\end{subequations}
In view of the block-diagonal structure of $N$, $F$, $\Phi_1$, $\Phi_2$, $\Phi_3$, and the Hilbert adjoints when restricted to $\L_{2\,}$, $\Xi_1 =  \bigoplus_{i=1}^n \Xi_{1,i}$, $\Xi_{2}= \bigoplus_{i=1}^n \Xi_{2,i}$, and $\Xi_{3} =\bigoplus_{i=1}^n \Xi_{3,i}$, where 
\begin{align*}
    \Xi_{1,i} &:= H_i^*\Phi_{1,i}H_i +(I_{m_i} - \mathbf{1}_{m_i}H_i)^*\Phi_{3,i} (I_{m_i} - \mathbf{1}_{m_i}H_i) \\
    &\quad + H_i^*\Phi_{2,i}(I_{m_i} - \mathbf{1}_{m_i}H_i) +(I_{m_i} - \mathbf{1}_{m_i}H_i)^*\Phi_{2,i}^*H_i~,\\
    \Xi_{2,i}&:=-H_i^*\Phi_{2,i}-(I_{m_i}-\mathbf{1}_{m_i}H_i)^*\Phi_{3,i}~,\\
    \Xi_{3,i}&:=\Phi_{3,i}~.
\end{align*}

\begin{remark}
    While the structure of \eqref{eq:stabiliy_G_3b} resembles that of the relevant IQC in the proof of~\cite[Theorem 1]{pates2016scalable} (see eq.~(6) there), a key difference exists. In particular, $z$ on the right of \eqref{eq:stabiliy_G_3b} is not the sum of $L_k z$ over $k\in[1:m]$. Further, working with a version of Theorem~\ref{thm:robust_stability} that has $\|y\|_2^2$ on the right-hand side of \eqref{eq:stab_G} (as exploited to arrive at (6) in~\cite{pates2016scalable}) instead of $\|u\|_2^2$, does not resolve the issue, since this only leads to $y=Nz$ on the right, which does not involve $L$. As such, it is still not possible to directly apply the setup in~\cite{pates2016scalable}.
\end{remark} 

The following expands upon ideas from the proof of~\cite[Theorem 1]{pates2016scalable} to enable decentralized verification of \eqref{eq:stabiliy_G_3b}:
\begin{lemma}\label{PatesReadapted}
Let $W=\sum_{k=1}^m W_k \in \mathbb{R}^{2m\times 2m}$ be such that $W_k\succeq 0$ and $W\succ 0$.
Suppose there exist $X_k\!=\!X^*_k:\L_{2\,}^{2m}\rightarrow\L_{2\,}^{2m}$,  $Z_k\!=\!Z_k^*:\L_{2\,}^{2m}\rightarrow\L_{2\,}^{2m}$, and $\epsilon_k>0$, $k\in[1:m]$, such that for all $z\in\L_{2\,}^{2m}$,
\begin{align}
    \label{ISCk}
       & \innerp{\begin{bmatrix}
		I_{2m} \\
		L_k\\
		\end{bmatrix} z} {\begin{bmatrix}
		X_k+\epsilon_k W_k & \Xi_{2} \\
    	\Xi_{2}^{*} & Z_k \end{bmatrix}
     \begin{bmatrix}
		I_{2m} \\
		L_k\\
		\end{bmatrix}z}\leq 0,~ k\in[1:m],\\
 \label{ISCCompa}
    &       \innerp{z}{\Xi_1 z} \leq \sum_{k=1}^m \innerp{z}{X_k z},\\
\label{ISCCompb}
      &     \innerp{L z}{\Xi_3 L z} \leq \sum_{k=1}^m\innerp{L_k z}{Z_k L_k z},
\end{align}   
where $L=\sum_{k=1}^m L_k$ is the sub-system graph Laplacian in~\eqref{eq:Laplacian_Subsystem_Graph}.
 Then, there exists $\epsilon\in\mathbb{R}_{>0}$ such that \eqref{eq:stabiliy_G_3b} for all $z\in\L_{2\,}^{2m}$.
  %
 \end{lemma}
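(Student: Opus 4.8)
The plan is to exploit the exact additive decomposition $L=\sum_{k=1}^m L_k$, so that the quadratic form on the left of \eqref{eq:stabiliy_G_3b} splits into a sum of the $m$ local forms in \eqref{ISCk}, up to the diagonal corrections supplied by \eqref{ISCCompa} and \eqref{ISCCompb}. Writing $Q(z)$ for the left-hand side of \eqref{eq:stabiliy_G_3b}, I would first expand it as
$$Q(z)=\innerp{z}{\Xi_1 z}+\innerp{z}{\Xi_2 Lz}+\innerp{Lz}{\Xi_2^* z}+\innerp{Lz}{\Xi_3 Lz}.$$
The key structural observation is that, because $L$ is linear and $L=\sum_k L_k$, the two cross terms split \emph{exactly}: $\innerp{z}{\Xi_2 Lz}=\sum_k\innerp{z}{\Xi_2 L_k z}$ and $\innerp{Lz}{\Xi_2^* z}=\sum_k\innerp{L_k z}{\Xi_2^* z}$, with the same off-diagonal operator $\Xi_2$ appearing in each local form \eqref{ISCk}.

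Next I would bound the two diagonal terms. Applying \eqref{ISCCompa} to the $\Xi_1$ term and \eqref{ISCCompb} to the $\Xi_3$ term gives $\innerp{z}{\Xi_1 z}\le\sum_k\innerp{z}{X_k z}$ and $\innerp{Lz}{\Xi_3 Lz}\le\sum_k\innerp{L_k z}{Z_k L_k z}$. Combining these with the exact splitting of the cross terms yields
$$Q(z)\le\sum_{k=1}^m\Big(\innerp{z}{X_k z}+\innerp{z}{\Xi_2 L_k z}+\innerp{L_k z}{\Xi_2^* z}+\innerp{L_k z}{Z_k L_k z}\Big).$$
Each summand is precisely the left-hand side of the $k$-th inequality \eqref{ISCk} with the slack term $\epsilon_k\innerp{z}{W_k z}$ removed. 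Since \eqref{ISCk} asserts that the full local form is $\le 0$, each summand is bounded above by $-\epsilon_k\innerp{z}{W_k z}$, and hence $Q(z)\le-\sum_k\epsilon_k\innerp{z}{W_k z}$.

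Finally I would convert this into the required uniform bound. Setting $\bar\epsilon:=\min_{k\in[1:m]}\epsilon_k>0$ and using $W_k\succeq 0$ (so $\innerp{z}{W_k z}\ge 0$) gives $\sum_k\epsilon_k\innerp{z}{W_k z}\ge\bar\epsilon\sum_k\innerp{z}{W_k z}=\bar\epsilon\innerp{z}{Wz}$, by $W=\sum_k W_k$. Since $W\succ 0$ as a (pointwise) matrix, there is $\delta>0$ with $z(t)'Wz(t)\ge\delta\,z(t)'z(t)$ pointwise, and integrating over $\real_{\ge 0}$ gives $\innerp{z}{Wz}\ge\delta\|z\|_2^2$. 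Thus $Q(z)\le-\bar\epsilon\delta\|z\|_2^2$, and the claim holds with $\epsilon:=\bar\epsilon\delta>0$.

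The argument is essentially bookkeeping; the only points requiring care are (i) that the off-diagonal operator $\Xi_2$ is shared identically between the global form and every local form \eqref{ISCk}, which is what makes the cross-term splitting an exact identity rather than merely an inequality, and (ii) that the strict negativity on the right-hand side of \eqref{eq:stabiliy_G_3b} comes entirely from the slack terms $\epsilon_k W_k$ together with $W\succ 0$; without these the decomposition would only deliver $Q(z)\le 0$. I do not anticipate a genuine obstacle beyond checking that the $\L_2$-operator inequalities \eqref{ISCCompa}--\eqref{ISCCompb} are invoked in the correct direction and that the minimum $\bar\epsilon$ is taken over the finite index set $[1:m]$.
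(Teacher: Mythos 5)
Your proposal is correct and takes essentially the same route as the paper's own proof: you sum the local constraints \eqref{ISCk}, use \eqref{ISCCompa}--\eqref{ISCCompb} to handle the diagonal blocks while the cross terms involving $\Xi_2$ split exactly by linearity, and then lower-bound $\sum_{k=1}^m \epsilon_k \innerp{z}{W_k z}$ using $\min_{k\in[1:m]}\epsilon_k$, the property $W_k\succeq 0$, and the coercivity constant of $W\succ 0$, exactly as the paper does (the only difference being the cosmetic order in which the comparison inequalities and the local IQCs are invoked). As a minor side note, your careful expansion of \eqref{ISCk} with $Z_k$ in the $(2,2)$ block is the correct reading; the paper's displayed per-$k$ inequality prints $\Xi_3$ where $Z_k$ is meant, a harmless typo that your write-up avoids.
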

    \begin{proof}
	 For all $z\in\L_{2\,}^{2m}$, and $k\in[1:m]$, \eqref{ISCk} implies
	    \begin{align*}
	        \innerp{z}{X_k z} \!+\! \innerp{L_k z}{\Xi_3 L_k z} \!+\! \innerp{z}{\Xi_2 L_k z} \!+\! \innerp{L_k z}{\Xi_2^* z} \!  \leq\! -\epsilon_k\! \innerp{z}{W_k z}\!, 
	    \end{align*}
      and therefore, 
	    \begin{align}
	        \nonumber 
         \sum_{k=1}^m  ( \innerp{z}{X_k z} + \innerp{z}{\Xi_2 L_k z}  + \innerp{L_k z}{\Xi_2^* z} + \innerp{L_k z}{\Xi_3 L_k z}  )\\
         \leq -\sum_{k=1}^m \epsilon_k \innerp{z}{W_k z} \leq
          -\epsilon \| z \|_2^2, \label{ProofISC2}
	    \end{align}
      where $\epsilon = \big(\min_{k\in[1:m]} \epsilon_k\big) \cdot \big(\min_{x\in\mathbb{R}^{n}} x^\prime W x/x^\prime x\big)>0$; note that 
      $\sum_{k=1}^m \epsilon_k W_k \succeq \big(\min_{k\in[1:m]} \epsilon_k\big) \sum_{k=1}^m W_k$, because each $W_k\succeq 0$.
	    Combining \eqref{ISCCompa}, \eqref{ISCCompb}, and \eqref{ProofISC2}, gives 
     \begin{equation*}
     \innerp{z}{\Xi_1 z} + \innerp{z}{\Xi_2 L z}  + \innerp{L z}{\Xi_2^*  z} + \innerp{L z}{\Xi_3 L  z}  \leq -\epsilon \| z \|_2^2,
     \end{equation*}
     which is \eqref{eq:stabiliy_G_3b} as claimed.
	\end{proof}

\section{Main result: Decentralized conditions}
\label{sec:struct}

In this section it is shown how the ingredients $W_k$, $X_k$, and $Z_k$, $k\in[1:m]$, in Lemma~\ref{PatesReadapted} may be selected to devise a collection of scalable conditions that together verify \eqref{eq:stabiliy_G_3b} for all $z\in\L_{2\,}^{2m}$,
and therefore, stability of $\sys{G}{\Delta}$, from which robust network stability follows by Theorem~\ref{thm:stable_sampledH}. The conditions are decentralized in the sense that each depends on agent model data that is local to a corresponding link, and a relevant component of the IQC based uncertainty description. To this end, a potentially conservative restriction is used to facilitate decentralized verification of \eqref{ISCCompb} in particular. This restriction corresponds to use of the decomposition $\Xi_{3,i} = E_{i} + D_{i}$, with 
$E_{i} = E_{i}^*:\L_{2\,}^{m_i}\rightarrow\L_{2\,}^{m_i}$ negative semi-definite, and $D_{i}=(\oplus_{k=1}^{m_i} D_{i,k}):\L_{2\,}^{m_i}\rightarrow\L_{2\,}^{m_i}$ diagonal. That such decomposition is always possible, although not uniquely, is formalized in the following lemma (if $\Xi_{3,i}=\Phi_{3,i}$ is already diagonal, then taking $E_{i}=O_{m_i}$ incurs no conservativeness):
\begin{lemma}
\label{lem:decomp}
Given bounded linear $\Xi=\Xi^*:\L_{2\,}^{p} \rightarrow \L_{2\,}^{p}$, $p\in\naturals$, there exist negative semi-definite $E=E^*:\L_{2\,}^{p} \rightarrow \L_{2\,}^{p}$, and diagonal $D:\L_{2\,}^{p} \rightarrow \L_{2\,}^{p}$, such that $\Xi=E+D$. 
\end{lemma}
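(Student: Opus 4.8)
The plan is to exploit the boundedness of $\Xi$ to dominate its off-diagonal part by a scaled identity, which is diagonal. Following the matrix notation for linear systems established in Section~\ref{subsec:notation}, I would represent $\Xi$ by the $p\times p$ array of scalar operators $\Xi_{(i,j)}:\L_{2\,}\rightarrow\L_{2\,}$, $i,j\in[1:p]$, acting on the coordinates of $\L_{2\,}^p\sim\L_{2\,}\times\cdots\times\L_{2\,}$. Self-adjointness $\Xi=\Xi^*$ corresponds to $\Xi_{(j,i)}=\Xi_{(i,j)}^*$ for all $i,j$; in particular each diagonal entry $\Xi_{(i,i)}$ is self-adjoint and bounded, being a compression of $\Xi$.

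First I would split $\Xi = \Xi_d + \Xi_o$, where $\Xi_d := \bigoplus_{i=1}^p \Xi_{(i,i)}$ collects the diagonal entries and $\Xi_o := \Xi - \Xi_d$ is the off-diagonal remainder. Both are bounded and self-adjoint, as sums and differences of such, so $\|\Xi_o\|<\infty$; and since $\Xi_o=\Xi_o^*$, one has the operator bound $\innerp{g}{\Xi_o g}\leq \|\Xi_o\|\,\innerp{g}{g}$ for every $g\in\L_{2\,}^p$, i.e. $\Xi_o \preceq \|\Xi_o\|\,I_p$.

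Then I would set
$$D := \Xi_d + \|\Xi_o\|\,I_p = \bigoplus_{i=1}^p\big(\Xi_{(i,i)} + \|\Xi_o\|\,\id\big), \qquad E := \Xi - D = \Xi_o - \|\Xi_o\|\,I_p.$$
By construction $D$ is diagonal, being a direct sum of scalar operators, and $\Xi=E+D$; moreover $E=E^*$ because $\Xi_o$ and $\|\Xi_o\|\,I_p$ are self-adjoint. The verification that $E$ is negative semi-definite is then immediate from the preceding operator bound, since $\innerp{g}{E g}=\innerp{g}{\Xi_o g}-\|\Xi_o\|\,\innerp{g}{g}\leq 0$ for all $g\in\L_{2\,}^p$. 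This construction also recovers the parenthetical special case noted after the statement: if $\Xi$ is already diagonal then $\Xi_o=0$, so $\|\Xi_o\|=0$, $E=O_p$, and $D=\Xi$.

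I do not expect a real obstacle here, as the mathematical content reduces to the fact that a bounded self-adjoint operator is dominated by a scaled identity, which is diagonal. The only points requiring care are notational: fixing precisely what \emph{diagonal} and \emph{negative semi-definite} mean for operators on the product space $\L_{2\,}^p$, and confirming that $\|\Xi_o\|$ is finite, which follows from boundedness of $\Xi$ together with boundedness of each diagonal block $\Xi_{(i,i)}$. A cruder but even shorter alternative would be to take $D=\|\Xi\|\,I_p$ and $E=\Xi-\|\Xi\|\,I_p$ directly, bypassing the split into $\Xi_d$ and $\Xi_o$; I would favour the diagonal/off-diagonal version only because it reproduces the $E=O_p$ special case exactly.
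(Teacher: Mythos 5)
Your proof is correct, and it rests on the same elementary fact the paper exploits---that a bounded self-adjoint operator on a Hilbert space is dominated by a scaled identity---but your decomposition is genuinely different and slightly sharper. The paper's proof invokes Lemma 2.7-2 and Theorem 9.2-1 of Kreyszig to obtain $\lambda_2\in\real$ with $\innerp{x}{\Xi x}\leq\lambda_2\|x\|_2^2$ for all $x\in\L_{2\,}^p$, and then simply sets $D:=\lambda_2\cdot I_p$ and $E:=\Xi-\lambda_2\cdot I_p$; the diagonal summand is a multiple of the identity, and all of $\Xi$ (diagonal and off-diagonal entries alike) is absorbed into $E$. You instead retain the diagonal part $\Xi_d=\bigoplus_{i=1}^p\Xi_{(i,i)}$ inside $D$ and shift only by the norm of the off-diagonal remainder: $D:=\Xi_d+\|\Xi_o\|\,I_p$, $E:=\Xi_o-\|\Xi_o\|\,I_p$. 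What your refinement buys is exactly what you note: when $\Xi$ is already diagonal, your construction returns $E=O_p$ and $D=\Xi$, reproducing the zero-conservativeness case flagged parenthetically just before the lemma statement, whereas the paper's construction yields a nonzero $E=\Xi-\lambda_2 I_p$ even then (the paper handles that case as a separate choice, permitted because the decomposition is non-unique). What the paper's version buys is brevity: it needs no block decomposition of $\Xi$ into scalar operators $\Xi_{(i,j)}$, no verification that each $\Xi_{(i,i)}$ is bounded and self-adjoint, and the spectral bound comes straight from cited results. Your ``cruder alternative'' ($D=\|\Xi\|\,I_p$, $E=\Xi-\|\Xi\|\,I_p$) is essentially the paper's proof with $\lambda_2=\|\Xi\|$. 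Since the lemma asserts only existence, either construction is a complete proof; yours is arguably preferable in the context of Proposition~\ref{prop:sharing_couples}, where a less conservative choice of $D_i$ matters.
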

\begin{proof}
In view of Lemma 2.7-2 and  Theorem 9.2-1 in~\cite{Kreyszig}, there exist $\lambda_1,\lambda_2 \in \real$ such that 
$\lambda_1 \|x\|_2^2 \leq \innerp{x}{\Xi x} \leq \lambda_2 \|x\|_2^2$ for all $x\in\L_{2\,}^p$. With $D:=\lambda_2  \cdot I_p$ and $E:=\Xi - \lambda_2  \cdot I_p$, $\Xi=E+D$ with $D:\L_{2\,}^p \rightarrow \L_{2\,}^p$ diagonal, and $E:\L_{2\,}^p \rightarrow \L_{2\,}^p$ negative semi-definite, because $\langle x, E x\rangle =\langle x, \Xi x\rangle + \langle x,-\lambda_2 x\rangle \leq  \lambda_2 \|x\|_2^2 - \lambda_2 \langle x, x\rangle=0$ for every $x\in \L_{2\,}^p$.  
\end{proof}
%

The subsequent matrix definition, and related properties, lead to the proposed selection of $W_k$, $X_k$, and $Z_k$, in Lemma~\ref{PatesReadapted} for decentralized verification of \eqref{ISCk}, \eqref{ISCCompa}, and \eqref{ISCCompb}. The definition pertains to the structure of the networked system m model, encoded by the network graph $\G=(\V,\E)$, where $\V=[1:n]$, $|\E|=m$, $|\N_i|=m_i$, and sub-system graph $\widetilde{\G}=(\widetilde{\V},\widetilde{\E})$, where $|\widetilde{\V}|=2m$, and $|\widetilde{\E}|=m$; see Section~\ref{sec:ideal}. In particular, for each $k\in[1:m]$, define 
\begin{align} \label{eq:whatB}
    \widehat{B}_k := (\mathrm{diag}(B_{(\cdot,k)}))^2 \in \mathbb{R}^{2m\times 2m}, 
\end{align}
where $B\in\mathbb{R}^{2m\times 2m}$ is the incidence matrix of the sub-system graph Laplacian matrix $L=\sum_{k=1}^m L_k = \sum_{k=1}^m B_{(\cdot,k)}B_{(\cdot,k)}^\prime$ in \eqref{eq:Laplacian_Subsystem_Graph}. The matrix $\widehat{B}_k$ is a diagonal matrix, with entries in $\{0,1\}$, and diagonal sparsity pattern that aligns with the two uncertain links between the agents associated with $r\in\widetilde{\V}$ and $s\in\widetilde{\V}$ for the given $k\in[1:m]$; i.e., agents $\{i,j\}=\kappa_{\E}^{-1}(k)$, where $\{r,s\}=\kappa_{\widetilde{\E}}^{-1}(k)$, assuming compatible enumerations, as specified where the incidence matrix $B$ is defined below \eqref{eq:Laplacian_Subsystem_Graph}. It can be shown by direct calculation that $\widehat{B}_k B_{(\cdot,k)}=(\mathrm{diag}(B_{(\cdot,k)}))^2 B_{(\cdot,k)}=B_{(\cdot,k)}$,
\begin{subequations}
\label{eq:BhatMagic}
\begin{align}
&\widehat{B}_k L_k = \widehat{B}_k B_{(\cdot,k)} B_{(\cdot,k)}^\prime = B_{(\cdot,k)} B_{(\cdot,k)}^\prime = L_k,
    \label{eq:BhatMagicC} \\
&\widehat{B}_k^\prime=\widehat{B}_k,~~~ \widehat{B}_k\widehat{B}_k=\widehat{B}_k, ~\text{ and }~  \widehat{B}_k~\mathrm{diag}{(d)}~\widehat{B}_l = O_{2m}, 
\label{eq:BhatMagicD}
\end{align}
for all $k\neq l\in[1:m]$, and $d\in\mathbb{R}^{2m}$, whereby
\begin{align} \label{eq:BhatMagicA}
    &(\sum_{k=1}^{m} \widehat{B}_k) ~(\sum_{l=1}^{m} \widehat{B}_l)=\sum_{k=1}^m\sum_{l=1}^m \widehat{B}_k\widehat{B}_l= \sum_{k=1}^m \widehat{B}_k = I_{2m}~,\\
    \begin{split}
    &(\sum_{k=1}^{m} \widehat{B}_k) ~\mathrm{diag}(d) ~(\sum_{l=1}^{m} \widehat{B}_l)  \\
    &  \quad = 
    \sum_{k=1}^m\sum_{l=1}^m \widehat{B}_k ~ \mathrm{diag}(d) ~ \widehat{B}_l 
    = \sum_{k=1}^{m} \widehat{B}_k~\mathrm{diag}(d)~\widehat{B}_k~.
    \end{split}\label{eq:BhatMagicB}
\end{align}
\end{subequations}
\begin{proposition}\label{prop:sharing_couples}
For each $k\in[1:m]$, let 
\begin{align}
    W_k &:= \widehat{B}_k~, \label{eq:t3P0}\\
    X_k &:= \frac{1}{2}\big(\widehat{B}_k\Xi_1 + \Xi_1 \widehat{B}_k\big)~, \label{eq:t3P1}\\
   Y_k &:= \Xi_2\widehat{B}_k~, \label{eq:t3P2}\\
    Z_k &:= 
    \widehat{B}_k\big(\bigoplus_{i=1}^n D_i\big)\widehat{B}_k~, \label{eq:t3P3}
\end{align}
where $\widehat{B}_k$ is defined in \eqref{eq:whatB}, and diagonal $D_i:\L_{2\,}^{m_i}\rightarrow\L_{2\,}^{m_i}$ is such that $\Xi_{3,i}=D_i+E_i$ for some negative semi-definite $E_i:\L_{2\,}^{m_i}\rightarrow\L_{2\,}^{m_i}$, $i\in[1:n]$, with $\Xi_1$, $\Xi_2$, $\Xi_3$ as per \eqref{t4Pi}. Given this,
if for all $k\in[1:m]$, there exists $\epsilon_k\in\mathbb{R}_{>0}$ such that
  \begin{equation}\label{PCSk2} 
        \innerp{\begin{bmatrix}
		I_{2m} \\
		L_k\\
		\end{bmatrix}z}{\left[ \begin{array}{cc} X_k+\epsilon_k W_k & Y_k \\
    	Y^{*}_k & Z_k \end{array} \right]\begin{bmatrix}
		I_{2m} \\
		L_k\\
		\end{bmatrix} z}\leq0
    \end{equation}
    for all $z\in\L_{2\,}^{2m}$,
 then \eqref{eq:stabiliy_G_3b} for all $z\in\L_{2\,}^{2m}$.
\end{proposition}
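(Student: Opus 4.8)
The plan is to show that the specific operators $W_k$, $X_k$, $Y_k$, $Z_k$ defined in \eqref{eq:t3P0}--\eqref{eq:t3P3} meet every hypothesis of Lemma~\ref{PatesReadapted}, so that the conclusion \eqref{eq:stabiliy_G_3b} follows by direct invocation of that lemma. Accordingly, I would work through the requirements of Lemma~\ref{PatesReadapted} in turn, with the algebraic identities \eqref{eq:BhatMagic} for the diagonal projectors $\widehat{B}_k$ doing the heavy lifting. For the positivity hypothesis, each $W_k=\widehat{B}_k$ is a $\{0,1\}$-diagonal matrix, hence $W_k\succeq 0$, while $W=\sum_{k=1}^m\widehat{B}_k=I_{2m}\succ 0$ by \eqref{eq:BhatMagicA}.

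Next I would reconcile the off-diagonal block: although Lemma~\ref{PatesReadapted} carries $\Xi_2$ where \eqref{PCSk2} carries $Y_k=\Xi_2\widehat{B}_k$, the two quadratic forms coincide. Indeed $Y_kL_k=\Xi_2\widehat{B}_kL_k=\Xi_2 L_k$ by \eqref{eq:BhatMagicC}, so $\innerp{z}{Y_kL_k z}=\innerp{z}{\Xi_2 L_k z}$; and since $\widehat{B}_k=\widehat{B}_k^*$ and $\widehat{B}_kL_k=L_k$, one likewise gets $\innerp{L_k z}{Y_k^* z}=\innerp{\widehat{B}_kL_k z}{\Xi_2^* z}=\innerp{L_k z}{\Xi_2^* z}$. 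Hence \eqref{PCSk2} is exactly \eqref{ISCk} for this choice, and the hypothesis \eqref{ISCk} holds with the same $\epsilon_k$. For the first compatibility condition, summing \eqref{eq:t3P1} and using $\sum_{k}\widehat{B}_k=I_{2m}$ gives $\sum_{k=1}^m X_k=\tfrac12\big((\sum_k\widehat{B}_k)\Xi_1+\Xi_1(\sum_k\widehat{B}_k)\big)=\Xi_1$, so \eqref{ISCCompa} holds with equality; note also that $X_k=X_k^*$ and $Z_k=Z_k^*$, since $\Xi_1=\Xi_1^*$, $\widehat{B}_k=\widehat{B}_k^*$, and $D=\bigoplus_i D_i$ is self-adjoint as the difference $\Xi_3-E$ of self-adjoint operators.

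The remaining, and principal, step is \eqref{ISCCompb}. Here I would invoke the decomposition of Lemma~\ref{lem:decomp}, $\Xi_{3,i}=D_i+E_i$ with $E_i$ negative semi-definite and $D_i$ diagonal, so that $\Xi_3=D+E$ with $D=\bigoplus_i D_i$ diagonal and $E=\bigoplus_i E_i\preceq 0$; since $E\preceq 0$, $\innerp{Lz}{\Xi_3 Lz}\leq\innerp{Lz}{D\,Lz}$. Using $\widehat{B}_kL_k=L_k$ and $\widehat{B}_k=\widehat{B}_k^*$, each summand reduces via $\innerp{L_k z}{\widehat{B}_kD\widehat{B}_kL_k z}=\innerp{L_k z}{D\,L_k z}$, so $\sum_k\innerp{L_k z}{Z_k L_k z}=\sum_k\innerp{L_k z}{D\,L_k z}$. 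The crux is then the identity $\innerp{Lz}{D\,Lz}=\sum_k\innerp{L_k z}{D\,L_k z}$, which is where the structure of the $1$-regular sub-system graph enters decisively: because each vertex lies on exactly one edge, the signals $L_kz$ have pairwise disjoint coordinate supports, and as $D$ is diagonal the cross terms $\innerp{L_k z}{D\,L_l z}$ with $k\neq l$ vanish --- this is precisely $\widehat{B}_kD\widehat{B}_l=O_{2m}$ for $k\neq l$ from \eqref{eq:BhatMagicD}, together with \eqref{eq:BhatMagicB}. This establishes \eqref{ISCCompb}.

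With all hypotheses of Lemma~\ref{PatesReadapted} verified, that lemma supplies $\epsilon\in\mathbb{R}_{>0}$ for which \eqref{eq:stabiliy_G_3b} holds for all $z\in\L_{2\,}^{2m}$, completing the argument. I expect the disjoint-support identity underlying \eqref{ISCCompb} to be the only genuinely delicate point: it is exactly the ingredient that converts the inherently global quadratic form $\innerp{Lz}{\Xi_3 Lz}$ into a sum of link-local terms, at the cost of the (possibly conservative) replacement of $\Xi_3$ by its diagonal part $D$.
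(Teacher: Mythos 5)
Your proof is correct and takes essentially the same approach as the paper's: it verifies each hypothesis of Lemma~\ref{PatesReadapted} for the stated choices using the projector identities \eqref{eq:BhatMagic}, with the disjoint-support cancellation $\widehat{B}_k \big(\oplus_{i=1}^n D_i\big) \widehat{B}_l = O_{2m}$ for $k\neq l$ and the negative semi-definiteness of $\oplus_{i=1}^n E_i$ delivering \eqref{ISCCompb}, exactly as in the paper. The only cosmetic differences are that your inequality chain for \eqref{ISCCompb} runs in the reverse direction, and that the decomposition $\Xi_{3,i}=D_i+E_i$ is a hypothesis of the proposition itself (Lemma~\ref{lem:decomp} only guarantees its existence), so it need not be ``invoked.''
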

\begin{proof}
With $Z_k$ as per \eqref{eq:t3P3}, 
in view of \eqref{eq:BhatMagicB}, and the hypothesis $\innerp{z}{(\oplus_{i=1}^n E_i) z} \leq 0$ for all $z\in\L_{2\,}^{2m}$,
\begin{align*}
\sum_{k=1}^m \innerp{L_k z}{Z_k L_k z}
&=\!
\sum_{k=1}^m  \innerp{\widehat{B}_k L_k z}{(\oplus_{i=1}^n D_i) \widehat{B}_k L_k z}\\
&\geq\! \sum_{k=1}^m \innerp{L_k z}{(\oplus_{i=1}^n D_i) L_k z} \!+\!
\innerp{L z}{(\oplus_{i=1}^n E_i) L z} 
\nonumber \\
&=\!  \innerp{L z}{\Xi_3 L z},
\end{align*}
which is \eqref{ISCCompb}. Similarly, with $X_k$ as per \eqref{eq:t3P1}, it follows that 
\begin{align*}
    \sum_{k=1}^m X_k
    = \frac{1}{2}\Big(\big( \sum_{k=1}^m \widehat{B}_k \big)\Xi_1 + \Xi_1 \big( \sum_{k=1}^m \widehat{B}_k\big)\Big) = \Xi_1,
\end{align*}
so that for all $z\in\L_{2\,}^{2m}$, $\sum_{k=1}^m \innerp{z}{X_k z} \leq \innerp{z}{\Xi_1 z}$, which is \eqref{ISCCompa}. Also, with 
$Y_k$ as per \eqref{eq:t3P2}, $Y_k L_k = \Xi_2 \hat{B}_k L_k =  \Xi_2 L_k$ in view of \eqref{eq:BhatMagic}, whereby \eqref{PCSk2} is the same as \eqref{ISCk} for all $k\in[1:m]$, and $z\in\L_{2\,}^{2m}$. Finally, with
$W_k\succeq 0$ as per \eqref{eq:t3P0}, $\sum_{k=1}^{m} W_k = \sum_{k=1}^{m} \widehat{B}_k = I_{2m} \succ 0$; see~\eqref{eq:BhatMagicA}. As such, Lemma~\ref{PatesReadapted} applies, and therefore, \eqref{eq:stabiliy_G_3b} for all $z\in\L_{2\,}^{2m}$ as claimed.
\end{proof}

\begin{remark}
While, for any given $k\in[1:m]$, the IQC in~\eqref{PCSk2} is expressed in terms of $z=(z_1,\ldots,z_{2m})\in\L_{2\,}^{2m}$, only $m_i+m_j$ of the co-ordinates contribute to the left side of the inequality, where $\{i,j\}= \kappa_{\E}(k)$. These co-ordinates pertain to the particular pair of links associated with $k$, being those transformed by corresponding components of the model data local to agents $i\in\V$ and $j\in\V$, and the structured link uncertainty description, embedded in $\Xi_1$, $\Xi_2$, $\Xi_3$ as per~\eqref{t4Pi}. It is in this sense that the collection of conditions \eqref{PCSk2}, over $k\in[1:m]$, is a decentralized certificate for \eqref{eq:stabiliy_G_3b}, and therefore, networked system robust stability with respect to the specified link uncertainty. The proposed decomposition is just one possibility. Alternatives are under investigation.
\end{remark}

\section{Conclusions}
\label{sec:conc}
Link-wise decentralized robust stability conditions are devised for networked systems in the presence of link uncertainty. This result is based on input-output IQCs that are used to describe the link uncertainties, and ultimately the corresponding structured robust stability certificate. Future work will explore alternative decompositions with a view to reducing conservativeness subject to maintaining scalability. It is also of interest to apply the main result to study specific networked system scenarios where information exchange is impacted by asynchronous time-varying delays, and dynamic quantization, for example.

\appendix
\noindent\textbf{\emph{Proof of Lemma~\ref{lem:stabstab}.}} \ref{it:stable}$)\implies$\ref{it:causalinv}). Consider $d_u=0$, and arbitrary $d_y=d\in\L_{2e}^p$ in \eqref{eq:feedback1}. By \ref{it:stable}), there exists unique $(y,u)\in\L_{2e}^p\times \L_{2e}^q$ such that $y=(G\circ\Delta)(y) + d$, $u=\Delta(y)$. In particular, $(I_p-G\circ\Delta) = (y \mapsto d)$ is a bijection on $\L_{2e}^p$, and the inverse $(d \mapsto y)$ is causal as $[\![G,\Delta]\!]$ is well-posed. Further, there exists $c_0>0$ such that $\|y\|_2 \leq \|(y,u)\|_2 \leq c_0 \cdot \|(d_y,0)\|_2 = c_0 \cdot \|d\|_2$ when $d_y=d\in\L_{2\,}^p$, since $[\![G,\Delta]\!]=((d_y,d_u)\mapsto(y,u))$ is stable.

\ref{it:causalinv})$\implies$\ref{it:stable}). By \ref{it:causalinv}), for every $d\in\L_{2e}^{2m}$, there exists unique $v\in\L_{2e}$ such that the following hold: $v = (G\circ \Delta)(v) + d$; the (inverse) map $d\mapsto y$ is causal; and there exists $c_0>0$ such that $\|y\|_2 \leq c_0 \cdot \|d\|_2$ whenever $d\in\L_{2\,}^{2m}$. Since $G$ is linear, it follows that for every $(d_y,d_u)\in\L_{2e}^p\times\L_{2e}^{q}$, there exists unique $y\in\L_{2e}^p$ such that $y = (G\circ\Delta)(y) + (G d_u + d_y) = G u + d_y$, where $u = \Delta(y) + d_u$; c.f.,~\eqref{eq:feedback1}. Moreover, since $G$ is causal, the composition of maps $(d_y,d_u) \mapsto (d_y+G d_u) \mapsto y$ is causal, and $\|y\|_2 \leq c_0\cdot \|G d_u + d_y\|_2 \leq c_0\cdot \|d_y\|_2 + c_0\cdot\|G\|\cdot\|d_u\|_2 \leq c_1 \cdot \|(d_y,d_u)\|_2$, with $c_1^2 \geq c_0^2(1+\|G\|^2)$, whenever $(d_y,d_u)\in\L_{2\,}^p\times\L_{2\,}^{q}$. Since $\Delta$ is stable, it follows that the composition $(d_y,d_u) \mapsto (y,d_u) \mapsto u$ is also causal, and $\|u\|_2 \leq \|\Delta(y)\|_2 + \|d_u\|_2 \leq \|\Delta\|\cdot c_0 \cdot \|d_y\|_2 + (\|\Delta\| \cdot c_0 \cdot \|G\| + 1) \|d_u\|_2 \leq c_3\cdot \| (d_y,d_u) \|_2$ with $c_3^2 \geq \|\Delta\|^2\cdot c_0^2 + (\|\Delta\| \cdot c_0 \cdot \|G\| + 1)^2$. Therefore, statement \ref{it:stable}) holds. \hfill $\square$

\bibliography{Bib}
\bibliographystyle{ieeetr}
\end{document}